\newcommand{\<}{\langle}
\newcommand{\II}{\mathds{1}}
\DeclareMathOperator{\Tr}{Tr}
\newcommand{\ket}[1]{|#1\rangle}
\newcommand{\bra}[1]{\langle#1|}
\newcommand{\ketbra}[2]{\ket{#1}\!\bra{#2}}
\newcommand{\braket}[2]{\langle #1|#2\rangle}
\g@addto@macro\bfseries{\boldmath}
\theoremstyle{definition}
\newtheorem{definition}{Definition}
\newtheorem{theorem}{Theorem}
\newtheorem{corollary}{Corollary}
\newtheorem{lemma}{Lemma}
\newcommand{\proj}[1]{\ket{#1}\!\bra{#1}}
\newcommand{\abs}[1]{\lvert #1 \rvert}
\newcommand{\norm}[1]{\lVert #1 \rVert}
\begin{document}
\title{Perturbative gadgets for gate-based quantum computing: \\ Non-recursive constructions without subspace restrictions}

\author{\href{https://orcid.org/0000-0002-9409-193X}{Simon~Cichy}}
\affiliation{Dahlem Center for Complex Quantum Systems, Freie Universit\"{a}t Berlin, 14195 Berlin, Germany}
\affiliation{Institute for Theoretical Physics, ETH Z\"{u}rich, 8093 Z\"{u}rich, Switzerland}
	
\author{\href{https://orcid.org/0000-0002-8706-1732}{Paul~K.~Faehrmann}}
\thanks{SC and PKF have contributed equally.}
\affiliation{Dahlem Center for Complex Quantum Systems, Freie Universit\"{a}t Berlin, 14195 Berlin, Germany}

\author{\href{https://orcid.org/0000-0002-9858-0511}{Sumeet~Khatri}}
\affiliation{Dahlem Center for Complex Quantum Systems, Freie Universit\"{a}t Berlin, 14195 Berlin, Germany}
	
\author{\href{https://orcid.org/0000-0003-3033-1292}{Jens~Eisert}}
\affiliation{Dahlem Center for Complex Quantum Systems, Freie Universit\"{a}t Berlin, 14195 Berlin, Germany}
\affiliation{Helmholtz-Zentrum Berlin f\"{u}r Materialien und Energie, Hahn-Meitner-Platz 1, 14109 Berlin, Germany}
\affiliation{Fraunhofer Heinrich Hertz Institute, 10587 Berlin, Germany}
	
\date{\today}

\begin{abstract}
    Perturbative gadgets are a tool to encode part of a Hamiltonian, usually the low-energy subspace, into a different Hamiltonian with favorable properties, for instance, reduced locality. Many constructions of perturbative gadgets have been proposed over the years. 
    Still, all of them are restricted in some ways: Either they apply to some specific classes of Hamiltonians, they involve recursion to reduce locality, or they are limited to studying time evolution under the gadget Hamiltonian, e.g., in the context of adiabatic quantum computing, and thus involve subspace restrictions. 
    In this work, we fill the gap by introducing a versatile universal, non-recursive, non-adiabatic perturbative gadget construction without subspace restrictions, that encodes an arbitrary many-body Hamiltonian into the low-energy subspace of a three-body Hamiltonian and is therefore applicable to gate-based quantum computing.
    Our construction requires $rk$ additional qubits for a $k$-body Hamiltonian comprising $r$ terms.
    Besides a specific gadget construction, we also provide a recipe for constructing similar gadgets, which can be tailored to different properties, which we discuss.
\end{abstract}

\maketitle

\startcontents[mainsections] 

\section{Introduction}

The study of many-body Hamiltonians is a well-researched field in condensed-matter physics and quantum information theory. 
While the presence of Hamiltonian terms acting on many qubits simultaneously can result in exciting phenomena, in many situations, it also carries additional difficulties.
Be it due to the hardness of generating them experimentally or other limitations, local Hamiltonians are usually simpler to deal with.

Born in the context of complexity theory for proving the QMA-completeness of the local Hamiltonian problem~\cite{kitaevClassicalQuantumComputation2002, kempe3-LocalHamiltonianIsQMA2003},
a problem located at the interface of
the theory of quantum many-body physics and Hamiltonian complexity,
so-called \emph{perturbative gadgets} can be used to reduce the locality of a given many-body Hamiltonian.
This is done by embedding it in the low-energy subspace of a tailored, local, i.e., few-body, Hamiltonian acting on a larger Hilbert space~\cite{kempeComplexityLocalHamiltonian2006,oliveiraComplexityQuantumSpin2008a}. 
Among the best-known constructions is the subdivision gadget proposed by Oliveira and Terhal~\cite{oliveiraComplexityQuantumSpin2008a}.
It employs mediator qubits to halve the support of a given Hamiltonian term and can be recursively applied to construct an at best three-body Hamiltonian mimicking the original $k$-body Hamiltonian.
As is typical for perturbative gadgets, such a procedure, unfortunately, leads to an exponential increase of the interaction strengths required for the gadget Hamiltonian to accurately mimic the low-energy subspace of the original Hamiltonian.

A different gadget, proposed by Jordan and Farhi~\cite{jordanPerturbativeGadgetsArbitrary2008a}, can avoid the need for recursion and offers a direct reduction from $k$-body to two-body Hamiltonians, albeit with an exponential suppression of energies.
However, their direct construction is only applicable when one can restrict the evolution of chosen initial states to predefined subspaces of the Hilbert space, as can be done in adiabatic quantum computing, where commutation relations guarantee the system remains in the initialized subspace.

Since such a limitation restricts the applicability of this direct $k$-to-two-local gadget, the question arises whether there exists a direct, non-recursive $k$-to-$k'$-local gadget without such restrictions and whether such a gadget could have any applications besides being a tool for proving complexity theoretic results.
After all, perturbative gadgets have so far mostly found their application in proving hardness results for the general local Hamiltonian problem~\cite{kempeComplexityLocalHamiltonian2006,oliveiraComplexityQuantumSpin2008a} or certain classes of Hamiltonians~\cite{cubittUniversalHamiltonians2018,bravyiComplexityIsing2017,zhouStronglyUniversalHamiltonian2021,piddockUniversalQuditHamiltonians2018,piddockUniversalTranslationallyInvariantHamiltonians2020} equipped with further constraints and properties.

In this work, we address these questions by proposing a direct $k$-to-three-local gadget without the need for subspace restrictions, that is inspired by Ref.~\cite{jordanPerturbativeGadgetsArbitrary2008a} but that avoids initialization and commutation properties from adiabatic quantum computing to achieve the same locality reduction as the repeated subdivision procedure~\cite{oliveiraComplexityQuantumSpin2008a}, without requiring recursive applications.

As expected of gadget constructions~\cite{harleyGoingGadgetsImportance2023}, our construction is also not free from unfavorable energy scalings, but we still
explore the use of such a gadget within gate-based quantum computing and, in particular, as a method for reducing the locality of the cost function in variational quantum algorithms, a setting that does not comply with the ``adiabatic'' restriction, or the grouping of measurement terms.

While most likely unfavorable for situations where the locality of a Hamiltonian scales with the system size, there are plenty of situations where the locality is constant but still high enough to cause problems. 
Examples include measurement noise scaling with the locality of the measurement operator influencing noise-induced barren plateaus~\cite{wangNoiseInducedBarrenPlateaus2021} or time evolution and quantum phase estimation for restricted hardware settings.
The proposed methods do not rely on subspace restrictions and can thus be implemented in gate-based quantum computing.
By providing a more general recipe for constructing perturbative gadgets with similar performance guarantees and suggesting plausible heuristics, we hope to start a discussion about how the trade-offs of perturbative gadgets could still be used for practical problems.

We begin by introducing the main ideas and a historical overview of perturbative gadgets in Section~\ref{sec:gadget-overview}, before focusing on a specific $k$-to-three-local gadget and discussing guarantees on its performance in Section~\ref{sec:new-gadget}.
We then generalize the gadget construction to provide a recipe for creating custom perturbative gadgets in Section~\ref{sec:recipe} and discuss the nuances of applying such gadgets in gate-based quantum computing in Section~\ref{sec:practical-applications}, before concluding with some remarks and open questions in Section~\ref{sec:outlook}.

\section{Perturbative gadgets in a nutshell}
\label{sec:gadget-overview}

\begin{table*}
\begin{center}
\renewcommand*{\arraystretch}{1.4}
\begin{tabular}{>{\raggedright}p{0.19\textwidth}
                >{\raggedright\arraybackslash}p{0.70\textwidth}
                }
    Authors & 
    Main statement 
    \\ [3pt]
    \hline
    Kempe, Kitaev, Regev (KKR)~\cite{kempeComplexityLocalHamiltonian2006} & 
    3- to 2-local gadget to prove QMA completeness of the local Hamiltonian problem.
    Requires restriction of the state evolution to a subspace of the Hilbert space.
    \\
    Oliveira, Terhal (OT)~\cite{oliveiraComplexityQuantumSpin2008a} & 
    Subdivision gadget ($k$- to $k/2$-local) and proof of QMA completeness of the Hamiltonian problem on a square grid. Has to be applied recursively resulting in exponential coupling strengths. 
    \\
    Biamonte, Love~\cite{biamonteRealizableHamiltoniansUniversal2008} & 
    ``Realizable Hamiltonians'' with a focus on using gadgets for adiabatic computing and adapting to hardware restrictions, not targeting a reduction of locality.
    \\
    Bravyi, DiVicenzo, Loss, Terhal~\cite{bravyiQuantumSimulationManyBody2008} & 
    Improvements on the OT gadget through extension to non-converging perturbative expansion, thus reducing the gap in coupling strengths 
    \\
    Jordan, Farhi~\cite{jordanPerturbativeGadgetsArbitrary2008a} & 
    Direct $k$- to $2$-local gadget through higher-order perturbation theory. 
    Requires restriction of the state evolution to a subspace of the Hilbert space.
    \\
    Cao, Babbush, Biamonte, Kais~\cite{Cao2015} & 
    Resource requirement improvements of some of the OT and KKR previous gadgets. 
    \\
    Cao~\cite{caoCombinatorialAlgorithmsPerturbation2016} & 
    Overview of existing gadgets and improvements of the OT and KKR gadgets. 
    \\
    Cao, Kais~\cite{caoEfficientOptimizationPerturbative2017} & 
    Improvement of the convergence bound for the JF gadget. 
    \\
    Subasi, Jarzynski~\cite{subasiNonperturbativeEmbeddingHighly2016} & 
    Nonperturbative gadgets for adiabatic Hamiltonian evolution. 
    \\
    Bausch~\cite{bauschGadgets2020} & 
    Augmentation of other gadget constructions and changes their energy scales by (potentially) increasing their locality by one.
    \\
    Harley, Datta, Klausen, Bluhm, França, Werner, Christandl~\cite{harleyGoingGadgetsImportance2023} & 
    A general framework for analog quantum simulation and unavoidable unfavorable size-dependent scalings of gadget constructions. \\
    This work & 
    Direct $k$- to $3$-local gadget without restriction to some subspace of the Hilbert space. \\
\end{tabular}
\end{center}
\caption[Overview of perturbative gadgets in the literature]{
    Overview of some relevant works related to perturbative gadgets. Not included are those that apply only to a restricted subset of Hamiltonians, such as topological Hamiltonians.
    }
\label{table:gadgets-overview}
\end{table*}

Perturbative gadgets~\cite{kempeComplexityLocalHamiltonian2006,oliveiraComplexityQuantumSpin2008a,jordanPerturbativeGadgetsArbitrary2008a,BrellGadget} encode the low-energy subspace of a many-body Hamiltonian into the low-energy subspace of a fewer-body Hamiltonian. 
They make use of perturbation theory in the opposite direction than what is more common: 
instead of considering the effect of some perturbation $\lambda V$ on a well-understood Hamiltonian $H$ according to $H' = H + \lambda V$, perturbative gadgets provide a suitable $H$ and perturbation $V$ such that $H'$ is a few-body Hamiltonian that approximates the low-energy subspace of a given, target Hamiltonian $H^{\text{target}}$, as visualized in Figure~\ref{fig:spectrum_embedding}. 
They are often accompanied by an increase in the number of required qubits and either suppress the energy of the desired low-energy subspace or an increase in the norm of $H'$ compared with $H^{\text{target}}$.

In most cases, a local and simple Hamiltonian with degenerate ground space is used as the so-called ``unperturbed Hamiltonian'' $H$. 
The degeneracy is chosen to correspond to the dimension of the space acted upon by the target Hamiltonian. 
Then, a perturbation is designed to split said degeneracy in a way that the resulting low-energy subspace emulates the target spectrum, so roughly
\begin{equation}
    \Pi H' \Pi \sim H^{\text{target}},
\end{equation}
where $\Pi$ denotes a projector on the low-energy subspace.
This statement will later be made formal in Theorem~\ref{theorem:main-result}.

To be explicit and clear, we use \textit{interaction weight} to refer to the number of particles acted upon by a given operator and \textit{interaction strength} to refer to the coefficient of that operator. 
When talking about qubits and expanding operators in the Pauli basis, the \textit{weight}
of a Pauli string (a tensor product of Pauli operators) is defined as the number of non-trivial Pauli operations in the string.
Throughout this work, we use ``locality'' to refer to the maximum weight of all terms in a Hamiltonian, independently of geometry.
We use the term \emph{local} for Hamiltonians with fixed small weights, that is weights that do not scale with problem size, while \emph{global} is reserved for Hamiltonians with all-to-all interactions.

An overview of works related to perturbative gadgets is presented in Table~\ref{table:gadgets-overview}.
Historically, perturbative gadgets were developed to prove the QMA-hardness of the local Hamiltonian problem. 
The $3$-local Hamiltonian problem is known to be QMA-complete~\cite{kempe3-LocalHamiltonianIsQMA2003} and Kempe, Kitaev, and Regev have shown that so is the $2$-local problem by proposing the first perturbative gadget, which reduces the locality from $3$ to $2$~\cite{kempeComplexityLocalHamiltonian2006}. 
Oliviera and Terhal have added to this by showing that the $2$-local Hamiltonian problem is still QMA-complete when restricting the interaction to a 2D square lattice, and in doing so introduced a new perturbative gadget. 
This subdivision gadget reduces the locality of a $k$-local Hamiltonian to $\lceil k/2\rceil+1$-local interactions by introducing a single auxiliary qubit per interaction term.
These kinds of gadgets have also been considered to be used outside of the world of complexity-theoretic analysis, but they are inherently not practical, as discussed further in Section~\ref{sec:practical-applications}.
\begin{figure}
    \centering
    \includegraphics{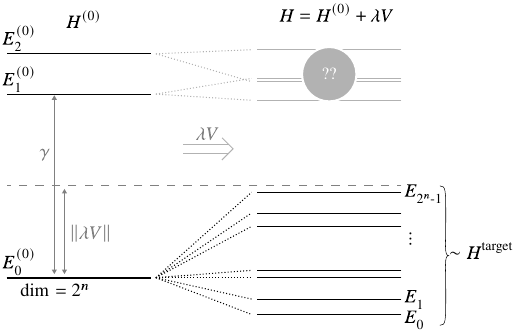}
    \caption[Perturbative gadgets: from unperturbed to target spectrum]
    {Sketch of the main idea behind perturbative gadgets. Starting from an unperturbed Hamiltonian $H^{(0)}$ with gap $\gamma$ and degenerate ground space, add a perturbation $V$ such that the lowest energy states resulting from the perturbation of the ground space mimic the target Hamiltonian $H^\text{target}$. 
    }
\label{fig:spectrum_embedding}
\end{figure}

Furthermore, the recursive use of gadgets to arrive at two-local Hamiltonians can be avoided by a single gadget introduced by Jordan and Farhi in Ref.~\cite{jordanPerturbativeGadgetsArbitrary2008a}, which encodes $k$-local Hamiltonians in $k^{\text{th}}$ order of perturbation theory of a two-local gadget Hamiltonian. 
However, just like the three-to-two local gadget, their construction relies on certain properties of the evolution under a Hamiltonian where the state is restricted to a predetermined subspace
, such as provided in the adiabatic setting.

In the following years, several works proposed improvements in resource requirements, convergence bounds, or coupling strengths \cite{Cao2015,caoCombinatorialAlgorithmsPerturbation2016,caoEfficientOptimizationPerturbative2017,bauschGadgets2020}, as well as introduced highly specialized gadget constructions for restricted classes of Hamiltonians, such as for topological quantum codes
and for realizing certain parent Hamiltonians featuring intrinsic topological order
\cite{BrellGadget,ockoNonperturbativeGadgetTopological2011}.

There are two important things to note at this point: While all of these gadget constructions rely on perturbation theory, they use different techniques, ranging from perturbative expansion due to Bloch~\cite{blochTheoriePerturbationsEtats1958}, which underlies the Jordan-Farhi gadget, the Feynman-Dyson series employed in 
Ref.~\cite{cubittUniversalHamiltonians2018}, to the Schrieffer-Wolf transformation~\cite{bravyiSchriefferWolffTransformation2011}, 
discussed in Ref.~\cite{caoEfficientOptimizationPerturbative2017}.
Furthermore, except for the so-called subdivision gadget due to Oliveira and Terhal~\cite{oliveiraComplexityQuantumSpin2008a}, the other gadgets are restricted to the setting of time evolution under the Hamiltonian, which guarantees the evolution to remain in the initialized subspaces, rendering them inapplicable for gate-based quantum computing. A non-recursive $k$-to-two-local gadget without these restrictions is currently unknown.

\section[Our gadget]{A direct \texorpdfstring{$k$}{k}-to-three-body gadget without subspace restriction}
\label{sec:new-gadget}

One of the main results of this work is the proposal of a direct, i.e., non-recursive, three-body gadget Hamiltonian $H^{\text{gad}}$ derived from an arbitrary $k$-body Hamiltonian $H^{\text{target}}$ that does not require any restrictions to a subspace of the Hilbert space, inspired by the work of Jordan and Farhi~\cite{jordanPerturbativeGadgetsArbitrary2008a}.
We then find that the low-energy subspace of $H^{\text{gad}}$ mimics the low-energy subspace of $H^{\text{target}}$ (Theorem~\ref{theorem:main-result}), implying a guarantee on the closeness of their ground states (Corollary~\ref{corollary:minimas}).

\begin{figure}
    \centering
    \includegraphics{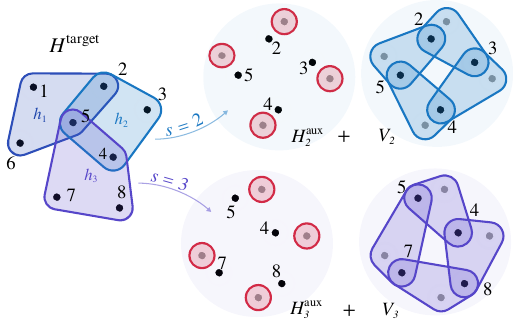}
    \caption{Representation of a target Hamiltonian (left) and its associated gadget Hamiltonian, as defined in Eq.~\eqref{eq:def-Hgad}, for the case of $n=8$, $k=4$, and $r=3$. Qubits that interact according to the terms of the Hamiltonian are placed in the same colored region. Shown in the central and right-hand panels are the contributions to the gadget Hamiltonian from two of the three four-body terms, with the auxiliary qubits in gray and the target qubits in black.}
    \label{fig:terms}
\end{figure}

We start by defining the specific gadget Hamiltonian that is the focus of our study.

\begin{definition}[Gadget Hamiltonian]\label{def:hamiltonians}
Let $H^{\text{target}}$ be a $k$-body Hamiltonian acting on $n$ qubits, given by
\begin{equation}\label{eq:def-Hcomp}
    H^\text{target} \coloneqq \sum\limits_{s=1}^r c_s h_s, \quad \text{with} \quad h_s = \sigma_{s,1} \otimes \sigma_{s,2} \otimes \dots \otimes \sigma_{s,k},
\end{equation}
where each term is a tensor product of at most $k\leq n$ single qubit operators, with $\sigma_{s,j}=n_X X+n_YY+n_ZZ$ and $(n_X,n_Y,n_Z)\in\mathbb{R}^3$ a unit vector.
We define the three-body gadget Hamiltonian $H^{\text{gad}}$ corresponding to $H^{\text{target}}$, acting on $n+rk$ qubits, as
\begin{equation} \label{eq:def-Hgad}
    H^{\text{gad}} \coloneqq \sum\limits_{ s = 1}^r H^\text{aux}_s + \lambda \sum\limits_{ s = 1}^r V_s,
\end{equation}
where
\begin{align} 
    H^\text{aux}_s &\coloneqq \sum\limits_{ j = 1}^k \frac{1}{2} \left( \II_{s,j} - Z_{s,j}^{\text{aux}} \right)
    = \sum\limits_{ j = 1}^k \ketbra{1}{1}_{s,j}^{\text{aux}}, \label{eq:def-Hanc} \\
    V_s &\coloneqq \sum\limits_{j=1}^k \tilde{c}_{s,j} \sigma_{s,j}^\text{target} \otimes X_{s,j}^\text{aux} \otimes X_{s,(j+1)\text{ mod }k}^\text{aux} \label{eq:def-V}
\end{align}
and $\tilde{c}_{s,j}=-(-1)^k c_s$ if $j=1$ or $\tilde{c}_{s,j}=1$ otherwise.
We refer to the $n$ qubits on which $H^\text{target}$ acts as \emph{target qubits}, and the $rk$ additional qubits as \emph{auxiliary qubits}.
\end{definition}

In our gadget Hamiltonian, each term $H_s^\text{aux}$ acts only on the $s^{\text{th}}$ group of auxiliary qubits, while each term $V_s$ acts on a target qubit and the $s^{\text{th}}$ group of auxiliary qubits; see Fig.~\ref{fig:terms} for a graphical depiction for a toy model.

\begin{figure*}
    \centering
    \includegraphics[width=\textwidth]{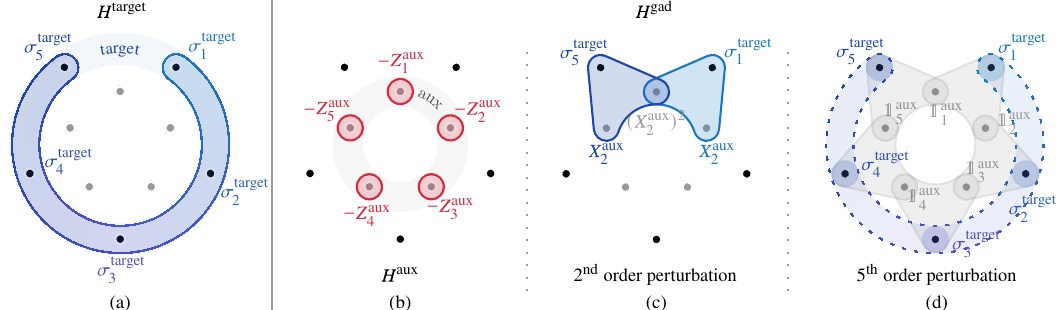}
    \caption{Visualization of the key steps in the proof of Theorem~\ref{theorem:main-result}. (a) For illustrative purposes, we take as the $k$-body target Hamiltonian $H^\text{target} = \bigotimes_{j=1}^5 \sigma_j$, where $\sigma_j\in\{X,Y,Z\}$ are Pauli operators, so that $k=5$. (b) The unperturbed auxiliary Hamiltonian, given in Eq.~\eqref{eq:def-Hanc}, acts only on the auxiliary qubits. (c) Acting with the perturbation $V$ in Eq.~\eqref{eq:def-V}, at second order, one term could result in flipping the second and last auxiliary qubits, hence ejecting the state out of the ground space and not contributing to the perturbative expansion. (d) At $k^{\text{th}}$ order in perturbation theory, we obtain a non-trivial contribution acting on all target qubits simultaneously, mimicking the target Hamiltonian, while acting trivially on the auxiliary qubits.
    }
    \label{fig:gadget-analysis}
\end{figure*}

The proposed gadget is heavily inspired by the gadget introduced by Jordan and Farhi in Ref.~\cite{jordanPerturbativeGadgetsArbitrary2008a}, in which they rely on properties of adiabatic quantum computing.
Both their gadget and ours take advantage of a larger Hilbert space to encode the low-energy subspace of the target Hamiltonian, so the dimension of the ground space of the unperturbed Hamiltonian must be the same as the space on which the target Hamiltonian acts.
This point is made more explicit in Appendix~\ref{A:proof}.
Jordan and Farhi start from a larger ground space of the unperturbed Hamiltonian and reduce the dimension by restricting the region of the Hilbert space that can be explored by initializing the system in a fixed subspace. 
Remaining in this chosen subspace is then ensured by using properties of adiabatic evolution, something that cannot be done straightforwardly outside of the regime of adiabatic quantum computing. 
This requirement, therefore, places a strong limitation on the applicability of their construction. 
On the other hand, our gadget construction does not require such a restriction to a subspace, therefore avoids this limitation, and can be applied even in the context of digital, gate-based quantum computing.
We refer to Appendix~\ref{A:Farhi-inapplicability} for a more detailed discussion of why a direct application of the gadget introduced by Jordan and Farhi is not possible in the realm of non-adiabatic quantum computing.

Similar to the results of Jordan and Farhi, we can show that the subspace of the $2^n$ lowest energy eigenstates of our three-body gadget Hamiltonian in Eq.~\eqref{eq:def-Hgad} mimics that of the $k$-body target Hamiltonian. 
For the formal statement, we define the following notation: Let $H$ be a Hamiltonian acting on $n$ qubits, with spectral decomposition $H=\sum_{j=0}^{2^n-1}E_j\ketbra{\psi_j}{\psi_j}$ such that $E_0\leq E_1\leq\dotsb\leq E_{2^n-1}$, we define $H_{\text{eff}}(H,d)=\sum_{j=0}^{d-1}E_j\ketbra{\psi_j}{\psi_j}$ {to} be the effective Hamiltonian corresponding to the $d$ lowest eigenvalues of $H$.

\begin{theorem}[Main result] \label{theorem:main-result}
Let $H^\text{target}$ and $H^\text{gad}$ be as in Definition~\ref{def:hamiltonians}, and let $\lambda \leq \lambda_{\text{max}}$, with $\lambda_{\text{max}} = \frac{1}{4}\left(\sum_{s=1}^r|c_s|+r(k-1)\right)^{-1}$. Then, there exists an $f(\lambda)=\mathcal{O}(\operatorname{poly} \lambda)$ and $\Xi=\mathcal{O}(\operatorname{poly} k)$ such that
\begin{multline} \label{eq:Heff}
    H_\text{eff}(H^{\text{gad}}, 2^{n})
    = \frac{\lambda^k}{\Xi} H^\text{target} \otimes \left(|0 \rangle\!\< 0 |\right)^{\otimes rk}
    \\+ f(\lambda)\Pi + \mathcal{O}(\lambda^{k+1}),
\end{multline}
where $\Pi$ is the projector onto the support of $H_\text{eff}(H^{\text{gad}},2^n)$.
\end{theorem}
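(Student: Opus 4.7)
The plan is to apply degenerate Rayleigh--Schr\"odinger perturbation theory, in the Bloch form used by Jordan and Farhi, to the splitting $H^{\text{gad}} = H_0 + \lambda V$ with $H_0 \coloneqq \sum_s H^{\text{aux}}_s$. First I would identify the unperturbed ground space $\mathcal{G}_0 = \mathbb{C}^{2^n}\otimes\ket{0}\!\bra{0}^{\otimes rk}$, of dimension $2^n$, and note that $H_0$ has a uniform spectral gap $\gamma = 1$ since every excitation flips at least one auxiliary qubit. Standard convergence criteria for the Bloch expansion (of the form $\lambda\Norm{V}<\gamma/4$), together with the norm estimate $\Norm{V}\le\sum_s\abs{c_s}+r(k-1)$ obtained by summing $\abs{\tilde c_{s,j}}\in\{\abs{c_s},1\}$ over all terms in Eq.~\eqref{eq:def-V}, then yield the stated $\lambda_{\max}$.

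The core of the proof is an order-by-order analysis of the effective Hamiltonian. Each $V_{s,j}$ acts on one target qubit by a single-qubit Pauli and on auxiliary group $s$ by flipping two adjacent qubits of the cycle graph $C_k$. A product $V_{s_1,j_1}\cdots V_{s_m,j_m}$ returns the system to $\mathcal{G}_0$ only if, within each group $s$, the multiset of selected edges forms an even subgraph of $C_k$. Because the cycle space of $C_k$ over $\mathbb{F}_2$ is one-dimensional and generated by the full cycle, the only such subgraphs are either (i) trivial (every edge used an even number of times) or (ii) the full cycle (each edge used an odd number of times, requiring at least $k$ factors in that group). For total order $m<k$ only case (i) is realizable, and since each $V_{s,j}$ used twice contributes $\sigma_{s,j}^2=\II$ on the target (via $(n_X X+n_Y Y+n_Z Z)^2=\II$), the contribution is purely scalar on the target sector. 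Collecting all such processes over orders $2\le m\le k-1$ produces the $f(\lambda)\Pi$ term with $f(\lambda)=\mathcal{O}(\lambda^2)$.

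At order $k$ the genuinely non-trivial contribution comes from closed walks applying each $V_{s,j}$ in a single group $s$ exactly once. Since all $V_{s,j}$ within a group commute (different target qubits, and only $X$'s on the auxiliary side), every one of the $k!$ orderings produces the same target operator $\prod_j\tilde c_{s,j}\,\sigma_{s,j}^{\text{target}}=(-1)^{k+1}c_s\,h_s$, and the orderings differ only in the intermediate energy denominators, which depend only on the Hamming weight of the current auxiliary state. Summing over orderings should produce a purely $k$-dependent combinatorial prefactor $1/\Xi$ with $\Xi=\mathcal{O}(\poly k)$; combined with the overall sign $(-1)^{k-1}$ coming from the $k-1$ negative denominators in Bloch's formula, this cancels the $(-1)^{k+1}$ from $\tilde c_{s,1}$ to yield precisely $+\lambda^k c_s h_s/\Xi$. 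Summing over $s$ reproduces the announced leading term, and the $\mathcal{O}(\lambda^{k+1})$ remainder follows from the geometric tail bound on the Bloch series valid for $\lambda<\lambda_{\max}$.

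The main obstacle will be the combinatorial step at order $k$: enumerating the $k!$ closed walks on $C_k$, tracking the intermediate denominators (each equal to the current number of flipped auxiliary qubits in group $s$), and resumming them to a clean closed-form $\Xi$. Two secondary technicalities are (a) verifying that $\Pi$ coincides with $\II_{\text{target}}\otimes\ket{0}\!\bra{0}^{\otimes rk}$ up to $\mathcal{O}(\lambda)$ corrections so that the scalar contributions at orders $2,\ldots,k-1$ genuinely collapse onto $f(\lambda)\Pi$, and (b) ensuring that mixed multi-group processes at intermediate orders---where each group independently realizes case (i)---also produce only scalars on target qubits, which follows from the same squaring argument applied group-wise.
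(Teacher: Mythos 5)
Your proposal follows essentially the same route as the paper's proof: the Bloch expansion with $H_0=\sum_s H_s^{\text{aux}}$, the gap $\gamma=1$ and the triangle-inequality bound on $\Norm{V}$ giving $\lambda_{\max}$, the observation that all contributions below order $k$ collapse to scalars on the ground-space projector while the order-$k$ full-cycle terms reproduce $c_s h_s$ with the sign fixed by $\tilde c_{s,1}$ against the $k-1$ negative energy denominators, and the final conjugation by $\mathcal{U}$ turning $P_0$ into $\Pi$ up to $\mathcal{O}(\lambda^{k+1})$. Your $\mathbb{F}_2$ cycle-space argument on $C_k$ is a cleaner formalization of the paper's informal "each flip must be undone" reasoning, but it is the same proof in substance, and you correctly flag the one step (the combinatorial resummation defining $\Xi$) that the paper also leaves at the level of a definition rather than a closed form.
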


Furthermore, we can provide a guarantee on the closeness of the ground states of the target and gadget Hamiltonians.

\begin{corollary}[Guarantees on the closeness of the ground states]\label{corollary:minimas}
Let $H^\text{target}$ and $H^\text{gad}$ be as in Theorem~\ref{theorem:main-result}. Then, there exists a perturbation strength $\lambda^*$, with 
\begin{equation}
    \lambda^*\leq\lambda_{\text{max}}
    \quad \text{ and } \quad
    \lambda^* \leq \frac{E^\text{target}_1 - E^\text{target}_0 }{\Xi \|O^\text{err} \|},
\end{equation}
such that for all $\lambda \leq \lambda^*$ it holds that
\begin{equation} \label{eq:gadget_GS_bound}
    \left\lVert \psi_0 - \Tr_{\text{aux}}[\phi_{0}] \right\rVert_2 \leq \mathcal{O}(\lambda),
\end{equation}
where $\psi_0=\ketbra{\psi_0}{\psi_0}$ and $\phi_0=\ketbra{\phi_0}{\phi_0}$ are in the ground spaces of $H^\text{target}$ and $H^\text{gad}$, respectively.
\end{corollary}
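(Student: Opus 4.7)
The plan is to combine the effective-Hamiltonian statement of Theorem~\ref{theorem:main-result} with a Davis--Kahan (sin-theta) perturbation argument, and then transfer the resulting bound through the partial trace.

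First, I rewrite $H_\text{eff}(H^\text{gad},2^n) = H^\star + \lambda^{k+1}\,O^\text{err}$ with $H^\star \coloneqq \tfrac{\lambda^k}{\Xi}\,H^\text{target}\otimes \ketbra{0}{0}^{\otimes rk} + f(\lambda)\Pi$ as supplied by Theorem~\ref{theorem:main-result}, where $\|O^\text{err}\|$ is chosen to be independent of $\lambda$. The ground state of $H^\text{gad}$ is, by construction of $H_\text{eff}$, the ground state $\ket{\phi_0}$ of the effective Hamiltonian. The contribution $f(\lambda)\Pi$ is a uniform shift on the $2^n$-dimensional low-energy subspace and therefore does not alter eigenvectors, so the ground state of $H^\star$ (restricted to that subspace) is $\ket{\tilde\psi_0}\coloneqq\ket{\psi_0}\otimes\ket{0}^{\otimes rk}$, separated by a gap $\tfrac{\lambda^k}{\Xi}(E^\text{target}_1-E^\text{target}_0)$ from the first excited state $\ket{\psi_1}\otimes\ket{0}^{\otimes rk}$.

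Next, I apply the Davis--Kahan $\sin\Theta$ theorem to the pair $(H^\star,\,H^\star+\lambda^{k+1}O^\text{err})$. As soon as the perturbation norm is smaller than the spectral gap above, i.e.\ once $\lambda \le (E^\text{target}_1 - E^\text{target}_0)/(\Xi\|O^\text{err}\|)$, the theorem yields $\|\ketbra{\phi_0}{\phi_0}-\ketbra{\tilde\psi_0}{\tilde\psi_0}\|_\infty \le \lambda^{k+1}\|O^\text{err}\|/\mathrm{gap} = \mathcal{O}(\lambda)$ on the ground-state projectors. Choosing $\lambda^*$ as the minimum of this threshold and $\lambda_\text{max}$ from Theorem~\ref{theorem:main-result} then satisfies both conditions in the statement. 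I then propagate this bound through the partial trace: since $\Tr_\text{aux}[\ketbra{\tilde\psi_0}{\tilde\psi_0}]=\psi_0$, the trace-norm contractivity of the partial trace gives $\|\Tr_\text{aux}[\phi_0]-\psi_0\|_1 \le \|\phi_0-\ketbra{\tilde\psi_0}{\tilde\psi_0}\|_1 = \mathcal{O}(\lambda)$, using that the trace norm of a difference of two rank-one projectors is twice their sin-theta angle, and the stated 2-norm bound follows from $\|\cdot\|_2\le\|\cdot\|_1$.

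The main obstacle I foresee is carefully justifying the support identifications in the first step: the operator $H^\text{target}\otimes\ketbra{0}{0}^{\otimes rk}$ is a priori supported on $\mathcal{H}_\text{target}\otimes\mathrm{span}(\ket{0}^{\otimes rk})$, whereas $H_\text{eff}(H^\text{gad},2^n)$ lives on the image of $\Pi$, and these two $2^n$-dimensional subspaces only coincide to leading order in $\lambda$; the discrepancy must be absorbed consistently into the $\mathcal{O}(\lambda^{k+1})$ error so that $\ket{\tilde\psi_0}$ may genuinely be treated as the ground state of $H^\star$ inside the perturbative window. A further technical caveat is degeneracy of the target ground space, in which case the bound should be read as a distance to a ground subspace rather than to a single pure state, requiring the subspace version of Davis--Kahan instead.
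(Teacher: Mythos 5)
Your proposal is correct and follows essentially the same route as the paper: subtract the uniform shift $f(\lambda)\Pi$, view the $\mathcal{O}(\lambda^{k+1})$ remainder as a perturbation of $\tfrac{\lambda^k}{\Xi}H^\text{target}\otimes\ketbra{0}{0}^{\otimes rk}$, and demand that its norm stay below the rescaled gap $\tfrac{\lambda^k}{\Xi}(E^\text{target}_1-E^\text{target}_0)$, which yields exactly the stated threshold on $\lambda^*$. In fact you are more explicit than the paper, which stops at the gap-separation condition and leaves the Davis--Kahan step, the partial-trace contraction, and the support-identification caveat about $\Pi$ versus $\II\otimes\ketbra{0}{0}^{\otimes rk}$ implicit.
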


In Fig.~\ref{fig:gadget-analysis}, we provide a visualization of the main steps in the proof of Theorem~\ref{theorem:main-result},  
focusing on the main ideas behind these results and referring the mathematically interested reader to the appendices. 
Specifically, after a recap in Appendix~\ref{A:Bloch-expansion} of the perturbation theory on which our results are based, we prove Theorem~\ref{theorem:main-result} in Appendix~\ref{A:proof} and Corollary~\ref{corollary:minimas} in Appendix~\ref{A:proof-corr}.

For clarity, the construction is stated for target Hamiltonians where each term has the same Pauli weight, and for the maximal reduction: to a three-local gadget Hamiltonian. 
This gadget can be generalized to both of these aspects and we show how to construct a gadget for mixed Pauli weights in Appendix~\ref{A:extension1} and also how to trade off target locality for reduced resources in Appendix~\ref{A:extension2}.

\section[General recipe]{A recipe for creating your own perturbative gadget}
\label{sec:recipe}

In the previous section, we presented a specific perturbative gadget construction. 
However, that construction is by no means unique, because there is some freedom in designing such gadgets. 
For instance, the choice of Pauli operators in the unperturbed Hamiltonian acting on the auxiliary qubits or in the perturbation could have been different. 
More generally, the general construction of the penalization in $H^\text{aux}_s$ and of the perturbation $V_s$ leaves many knobs to tweak. 
Especially for practical implementations, it might be interesting to derive slightly different gadgets, which result in similar low-energy spectra but are constructed from a different set of operators, for instance, some that are easier to implement on the hardware of choice.

Here, we present a more general framework for the construction of non-recursive, non-adiabatic perturbative gadgets based on higher-order perturbation theory that all lead to similar results as Theorem~\ref{theorem:main-result}, with the only notable difference being the exact form of $\Xi$ in Eq.~\eqref{eq:Heff}.

The core idea is again to start from a well-known, unperturbed Hamiltonian $H^\text{aux}$ and to perturb it with a perturbation $\lambda V$, such that we recover the original, $k$-body target Hamiltonian at  $k^{\text{th}}$ order in perturbation theory.
Since the $k^\text{th}$ order in perturbation theory comes with $k$-fold applications of the perturbation $V$, we engineer the perturbation such that only cross-terms leading to the desired result contribute. 
That is if the target Hamiltonian has been cut into $k$ pieces $O^\text{target} = O_1 \otimes \ldots \otimes O_k$, we ensure that only the correct $k$-fold products of $V$ survive. 

To construct a perturbative gadget following a similar recipe as the one presented in this work, we require a \emph{penalization Hamiltonian} $H$ and a \emph{perturbation operator} $A$, both acting on a single auxiliary register of $k$ qubits.
Although those are not the unperturbed Hamiltonian $H^\text{aux}$ and the perturbation $V$ directly, they are closely related. $H^\text{aux}$ will be built using instances of $H$, while the perturbation operator will be a key component in the construction of the perturbation $V$.
We further require $H^\text{aux}$ to be composed of few-body terms and to have a non-degenerate ground space to avoid the limitations of prior gadget constructions~\cite{kempeComplexityLocalHamiltonian2006,subasiNonperturbativeEmbeddingHighly2016,jordanPerturbativeGadgetsArbitrary2008a}.
That is, we need
\begin{equation}\label{eq-penalization_H}
    H=\sum_i h_i,
\end{equation}
for some finite sum of operators $h_i$, each of which is few-body, such that $H$ has a unique ground state vector, 
which we denote by $| GS \rangle $.
We note that the restriction on few-body terms is not strictly required for the construction to yield the desired low-energy subspace, but for decreasing the locality of the Hamiltonian.

Let us now argue that the operator $A$ should exhibit a product form of $k$ operators $a_1,a_2,\dotsc,a_k$, such that
\begin{equation}\label{eq-perturbation_A}
    A = \prod_{j=1}^k a_j,
\end{equation}
with the following properties:
\begin{align} \label{Aeq:conditions_A}
    & A |GS\rangle \propto |GS\rangle, \\
    & \< GS | \prod_{\ell=1}^m a_{j_{\ell}} |GS\rangle = 0 \quad \forall~ m<k,\, \forall~\{j_{\ell}\}\subset\{1,2,\dotsc,k\}, \\
    & a_j^2 = \II \quad \forall~j\in\{1,2,\dotsc,k\}.
\end{align}
The first condition states that the ground state of $H$ should also be an eigenstate of $A$, while the second enforces that any partial application of the operators $a_j$ results in a state orthogonal to $|GS\rangle$.
These are vital to ensure that at orders lower than $k$ in perturbation theory, the perturbation will always expel the state out of the ground space.
The final property ensures that when the same perturbation is applied twice, it results in a constant energy shift.

With the above properties, we can construct $H^{\text{aux}}$ and $V$ as
\begin{align}
    H^\text{aux}_s & = \underbrace{\II^{\otimes n}}_{\mathclap{\text{target register}}\hspace{1,0em}} \otimes \underbrace{ \II^{\otimes (s-1)k} \otimes H \otimes \II^{\otimes (r-s)k}}_{\text{auxiliary registers}}, \label{def:recipe-Haux}\\
    V_s & = \sum_{j=1}^k \tilde{c}_{s,j} \sigma_{s,j} \otimes \II^{\otimes (s-1)k} \otimes a_j \otimes \II^{\otimes (r-s)k},
\end{align}
where the definition of the coefficients $\tilde{c}_{s,j}$ provides another knob to tweak. They could be defined similarly to Definition~\ref{def:hamiltonians} or in an arbitrary fashion, as long as
\begin{equation}
    \prod_{j=1}^k \tilde{c}_{s,j} = -(-1)^k c_s
\end{equation}
still holds. 
Given these conditions, similar results as in Theorem~\ref{theorem:main-result} will hold due to the arguments laid out in Appendix~\ref{A:proof}. 
Indeed, the gadget we propose is simply a special case of
\begin{align}
    h_i & = \frac{1}{2} (\II - Z_i) = | 1 \rangle \! \< 1 |_i, & & 
    i\in\{1,2,\dotsc,k\},\\
    a_j & = X_j \otimes X_{(j+1) \text{ mod } k}, & & 
    j\in\{1,2,\dotsc,k\}.
\end{align}
Note that the Hamiltonian in Eq.~\eqref{def:recipe-Haux} has a non-degenerate ground space when considering only the auxiliary registers but has a $2^n$ degenerate ground space when additionally considering the target qubits which it acts trivially upon.

Furthermore, we would like to stress that any gadget constructed in this fashion will be accompanied by a suppression of the target Hamiltonian in the low-energy subspace by a factor of $\lambda^k$, therefore not avoiding a blow-up in resource costs exponential in the reduction in the locality.
Since the perturbative expansions only hold as long as the perturbation parameter $\lambda\leq \frac{1}{4}\left(\sum_{s=1}^r|c_s|+r(k-1)\right)$, this trade-off is ingrained into the construction and cannot be circumvented by rescaling the coefficients $c_s$ or similar.

\section[Practical applications]{On the trade-offs in using perturbative gadgets for practical applications}
\label{sec:practical-applications}

Perturbative gadgets have been a useful tool in complexity theory, in particular in the study of the QMA-completeness of the local Hamiltonian problem~\cite{kempeComplexityLocalHamiltonian2006, oliveiraComplexityQuantumSpin2008a}. 
These successes and the locality-reducing property encourage us to look for practical applications of these kinds of gadgets outside of the world of complexity-theoretic analyses.
The first candidate is adiabatic computing and analog quantum simulation~\cite{biamonteRealizableHamiltoniansUniversal2008, bravyiQuantumSimulationManyBody2008, cubittUniversalHamiltonians2018, harleyGoingGadgetsImportance2023}. 
Due to the limitations of experimental setups to implement few-body couplings, these simulations could benefit from the locality reduction to effectively study many-body Hamiltonians while physically implementing few-body interactions.
Unfortunately, the large differences in coupling strengths pose a challenge for experimental realizations~\cite{bravyiQuantumSimulationManyBody2008, bauschGadgets2020, harleyGoingGadgetsImportance2023}.
Having lifted the restriction to subspaces in the gadget construction limiting previous proposals, we can now turn towards gate-based quantum computing in the search for potential applications, an area that was previously not readily accessible. 
In the following, we look at four specific settings that could benefit from perturbative gadgets and discuss the accompanying trade-offs, which are usually unfavorable. 
While we set the stage for the potential application of our gadgets to these settings, we leave the detailed study of these applications to future work.

We start with variational quantum algorithms and discuss the issue of cost-function-dependent barren plateaus and the readout problem, before noting the impact of the locality of the measurement operator on measurement noise.
Afterward, we discuss possible applications in phase estimation and time evolution for restricted hardware settings.

\paragraph*{Cost-function-dependent barren plateaus.}
\emph{Variational quantum algorithms}  (VQAs)~\cite{cerezoVariationalQuantumAlgorithms2021a,bhartiNoisyIntermediatescaleQuantum2022,mccleanTheoryVariationalHybrid2016a} are strong in the run to find useful applications of quantum computers in the \emph{noisy intermediate scale quantum} (NISQ) era~\cite{preskillQuantumComputingNISQ2018a} and are being intensively studied.
These algorithms rely on \emph{parametrized quantum circuits} (PQCs) to evaluate a parameter-dependent cost function related to the expectation value of a set of observables. 
An optimization algorithm, e.g., gradient descent, implemented on a classical device is used to optimize the parameters of the PQC to find the minimum of the cost function. 

Two of the most prominent examples of VQAs might be the variational quantum eigensolver~\cite{peruzzoVariationalEigenvalueSolver2014a}, aimed at finding energies of Hamiltonians, and the quantum approximate optimization algorithm~\cite{farhiQuantumApproximateOptimization2014}.
One hurdle to overcome for such algorithms to become useful is the so-called barren plateau issue~\cite{mccleanBarrenPlateausQuantum2018a, wangNoiseInducedBarrenPlateaus2021,marreroEntanglementInducedBarren2021, holmesConnectingAnsatzExpressibility2022, martinBP22}. 
This refers to the phenomenon where the cost-function landscape becomes essentially flat, rendering optimization algorithms ineffective due to the lack of efficiently measurable gradients. 
Many strategies for mitigating barren plateaus have been proposed~\cite{grantInitializationStrategyAddressing2019, holmesConnectingAnsatzExpressibility2022,volkoffLargeGradientsCorrelation2021, wiersemaMeasurementinducedEntanglementPhase2021, meleAvoidingBarrenPlateaus2022, skolikLayerwiseLearningQuantum2021, kieferovaQuantumGenerativeTraining2021,dborinMatrixProductState2021, liuMitigatingBarrenPlateaus2022, sackAvoidingBarrenPlateaus2022a, kianiLearningQuantumData2022}, but the observation most appealing to us is the fact that, given two cost functions with the same minimum but different localities, the one with reduced locality performs better~\cite{Khatri2019,laroseVQSD,prietoVQLS}. 
This dependence of the emergence of barren plateaus on the locality of the cost function was formalized for some ansätze in Refs.~\cite{cerezoCostFunctionDependent2021a,uvarovBarrenPlateausCost2021a}.

Looking back at the results from Section~\ref{sec:new-gadget}, our gadget enables a generic procedure for any cost function written as the expectation value of a Hamiltonian to find an equivalent $3$-local cost function.
This new cost function has the same minimum (Corollary~\ref{corollary:minimas}) as the target cost function and has provably non-vanishing gradients (due to Ref.~\cite[Theorem~2]{uvarovBarrenPlateausCost2021a}). 

Upon first inspection, the problem of cost-function-locality-dependent barren plateaus seems to be solved, but one needs to take a step back and remember the original problem. 
Barren plateaus are an issue because exponentially vanishing gradients require an exponential number of measurements on a quantum device to resolve the gradients. 
This experimental cost means that such procedures are not scalable. 
Although on the surface our proposal produces gradients that can be computed efficiently, finding the minimum of the \emph{effective} gadget Hamiltonian might become hard. Indeed, let us consider the extreme case of a global cost function, i.e., an $n$-local cost function.
In this setting, from Theorem~\ref{theorem:main-result}, the contribution of the target Hamiltonian is suppressed as $\lambda^n$. 
In the end, the features of interest from the target Hamiltonian might become exponentially small.
Consequently, although the true minimum of the gadget cost function coincides with the minimum of the target cost function, the actual optimization may converge very close to the minimum of the gadget, but owing to the exponential suppression the energy of the target may still be very high.

Not all is lost, and we show some successful simulations in Appendix~\ref{A:numerics}. 
Our gadget-induced cost function could be used as an initialization strategy. 
Indeed, even if the target Hamiltonian is exponentially suppressed in the effective low-energy subspace, each of the terms of the perturbation in gadget Hamiltonian contains some part of the target Hamiltonian. 
It could be that using our gadget provides a useful path to reach a region of the target cost function that is outside of the regime plagued by barren plateaus. 
Furthermore, previous works on perturbative gadgets argue that restricting to the regime where the perturbative expansion converges is not necessarily justified~\cite{bravyiQuantumSimulationManyBody2008}. 
In our case, that means that it could be beneficial to choose a perturbation parameter $\lambda$ larger than the bound for which our analytical results hold, thus increasing the contribution of the target Hamiltonian. 
This idea is supported by the results of our numerical simulations.
Interpreting this perturbation factor as a hyperparameter of the model, one could take inspiration from classical machine learning and implement a schedule of slowly decreasing $\lambda$, having strong contributions at the beginning and high precision at the end.
These heuristics will however need to be confirmed by larger experiments than those we could perform or by further theoretical studies. 
An extensive description of the problem of cost-function-dependent barren plateaus and of the attempt to use perturbative gadgets, complemented with numerical simulations, can be found in Appendix~\ref{A:barren-plateaus}.

\paragraph*{Reducing measurement bases for readout.}

Another, albeit more pessimistic example of possible trade-offs using perturbative gadgets is the construction of a gadget aimed at exploiting the commutation structure of the gadget Hamiltonian.

As for the gadget introduced in Eq.~\eqref{eq:def-Hgad}, by first measuring all auxiliary qubits in the Pauli-$Z$ basis, we can estimate $H^\textrm{aux}$.
Then, consecutively measuring all auxiliary qubits in the Pauli-$X$ basis and simultaneously all target qubits in one of the three Pauli bases, we can estimate all of the $V_s$ terms. Consequently, only four different measurements are required to obtain an unbiased estimator of the energy of the gadget Hamiltonian.

Going one step further, we could design a new gadget that decomposes each term of the Hamiltonian into its Pauli-$X$, Pauli-$Y$, and Pauli-$Z$ parts and use each part in a single perturbation term. That would lead to a gadget that reconstructs the original term in the third order of perturbation theory and thereby reduces the complexity of the gadget.

In case the locality is of no importance, we can also reduce the number of terms in the gadget Hamiltonian and the qubit overhead. Through a different definition of the part of the perturbation in the gadget Hamiltonian that acts on the auxiliary register, we can achieve only a logarithmic qubit overhead, as shown in more detail in Appendix~\ref{A:measurement_gadget}.

However, we still end up facing the same challenge as before: While we can efficiently estimate the energy of the gadget Hamiltonian, that is not the quantity we are interested in.
Instead, we want to measure the energy of the target Hamiltonian, which is suppressed by $\lambda^3$ and will, due to the implicit dependence of $\lambda$ on the number of terms in the target Hamiltonian, lead to an unfavorable scheme compared to simply measuring each term of the Hamiltonian by itself.
The only potential use case for such a gadget would then be a setting where it is excessively more expensive to change the measurement setting than to evaluate the circuit.

\paragraph*{Measurement noise.}
Another setting in which the locality of certain operators can noticeably impact the performance of a gate-based quantum computer is measurement noise.
As shown in the context of noise-induced barren plateaus~\cite{wangNoiseInducedBarrenPlateaus2021}, measurement noise scales directly with the locality of the measurement operator.
Therefore, even when the locality of the cost function is constant in the number of qubits, even constant reductions in locality might improve the performance of the (variational) algorithm.
It is important to note that this reduction will still be accompanied by energy suppression, so whether such a trade-off is favorable again highly dependent on the actual setting.
However, it is an application within gate-based quantum computing, that could previously not be addressed by perturbative gadgets due to the subspace restrictions of previous constructions or their recursive nature overcomplicating implementation.

\paragraph*{Phase estimation.}
The final example of a situation that could benefit from reductions in locality is that of quantum phase estimation.
There, controlled time evolutions for some Hamiltonian $H$ are combined with the quantum Fourier transform to estimate the energies of quantum states.
Since approximating the time evolution, e.g., by product formulas requires the application of quantum gates whose locality scales with the locality of the Hamiltonian, the feasibility of such algorithms depends on the used hardware platform.
Appropriate perturbative gadgets could then be used to circumvent limitations in hardware connectivity or reduce the number of required entangling gates.
While the accompanying trade-off in the energy scale would result in improved required accuracy of its readout and consequently larger circuit depths and auxiliary qubit counts, they might be favorable in certain settings.
Also, since quantum phase estimation requires the preparation of an initial state with a large overlap of the true ground state (if the goal is to estimate the ground state energy), using perturbative gadgets does not introduce additional complications, because the previous initial state can just be extended by initializing all auxiliary qubits of the gadget in the zero state.

\section{Summary and outlook}
\label{sec:outlook}

Perturbative gadgets are a powerful tool that can be used to show equivalences between Hamiltonians of different localities. 
They have led to advances in quantum complexity theory, 
addressing problems that originate from studies of quantum many-body systems in rigorous condensed-matter theory, 
and have been considered tools to facilitate feasible implementations in experimental settings. 
They have also been explored as tools to devise schemes of quantum simulations of systems that are natively too difficult to tackle,
say, quantum systems featuring certain kinds of intrinsic topological order \cite{BrellGadget,Wille}.
At the same time, perturbative gadgets are quite often accompanied by unfavorable trade-offs in interaction strengths, recursions, or limitations to settings allowing for subspace restrictions, bringing into question their use for practical applications~\cite{harleyGoingGadgetsImportance2023}.

In this work, we have proposed a type of perturbative gadget that jointly lifts the requirements on subspace restrictions and recursions but cannot circumvent the problem of impractical interaction strengths.
It achieves the same locality reduction as the subdivision gadget from Oliveira and Terhal~\cite{oliveiraComplexityQuantumSpin2008a}, but has a direct formulation in higher-orders of perturbation theory as does the gadget from Jordan and Farhi~\cite{jordanPerturbativeGadgetsArbitrary2008a}, without requiring to restrict the evolution of the system to a subspace of the Hilbert space. 
In doing so, we observe that each of these constructions is a special case and demonstrate how to generalize our construction to a larger class of gadgets applicable to gate-based quantum computing.

Furthermore, we turn towards the setting of variational quantum algorithms, where the locality of cost functions can be of great interest and thus benefit from generic procedures producing Hamiltonians with the same ground states but reduced locality.
Also for fault-tolerant quantum algorithms, reductions in locality could help alleviate hardware constraints or impact measurement noise.

We note that locality versus energy scale trade-offs in using perturbative gadgets may still be impractical outside of theoretical considerations and therefore leave the question of their practicality wide open.
Circumventing the need for subspace restrictions and recursions does, however, broaden their scope and therefore opens up new directions of exploration for finding practical applications.
It might be interesting to consider the question of how imposing restrictions on the target Hamiltonian might allow us to improve the resource requirements of our gadget construction. 
In the same flavor as gadgets for topological quantum codes, there might be classes of Hamiltonians for which the effective Hamiltonian can be retrieved at a lower perturbation order. 
Such a development would trade off universality for performance, an idea established in the area of variational quantum algorithms and beyond.

\paragraph*{Author contributions.} 
PKF has conceived the project. The theoretical analysis has been developed by SC, PKF, and SK, and the numerical demonstrations have been developed by SC under the supervision of PKF and SK. JE has supported research and development. All authors have contributed to
scientific discussions and to the writing of the manuscript.\smallskip

\paragraph*{Acknowledgements.} 
We thank Johannes Jakob Meyer, Jakob Kottmann, and Lennart Bittel for helpful discussions and for providing feedback on an early version of the manuscript. Financial support has been provided by the BMWK (PlanQK, EniQmA),
the QuantERA (HQCC),
the BMBF (Hybrid), the 
Munich Quantum Valley 
(K-8), the ERC (DebuQC), and the 
Einstein Foundation (Einstein Research 
Unit on Quantum Devices).\smallskip

\paragraph*{Code availability.} 
The numerical simulations were performed using the cross-platform Python library for differentiable programming of quantum computers PennyLane~\cite{bergholmPennyLaneAutomaticDifferentiation2022} and the code is available at \href{https://github.com/SimonCichy/barren-gadgets}{https://github.com/SimonCichy/barren-gadgets}.

\bibliography{mybibliography.bib}
\stopcontents[mainsections]
\newpage
\onecolumngrid

\newpage
\onecolumngrid
\begin{center}
\textbf{\large Appendix for ``Perturbative gadgets for gate-based quantum computing: Non-recursive constructions without subspace restrictions''}\\
\vspace{1ex}
\end{center}
\appendix

\startcontents[appendices]
\printcontents[appendices]{l}{1}{}

\section{Review of perturbation theory}
\label{A:Bloch-expansion}
In this section, we briefly summarize relevant results from perturbation theory. In particular, we closely follow the presentation in Ref.~\cite{blochTheoriePerturbationsEtats1958} for obtaining a general perturbative expansion, which we need for the Proof of Theorem~\ref{theorem:main-result} in Appendix~\ref{A:proof}.

\subsection{Definitions}
\label{A:Definitions}
The goal of perturbation theory is to calculate the effect of a small perturbation $V$ of strength $\lambda\geq 0$ on a known, unperturbed Hamiltonian $H^{(0)}$.
In particular, the goal is to study the Hamiltonian
\begin{equation}
    H = H^{(0)} + \lambda V
\end{equation}
and to determine its low-energy eigenvalues and corresponding eigenvectors as a function of $\lambda$. To this end, suppose that $H^{(0)}$ and $V$ act on a $D$-dimensional Hilbert space. Let $E_0^{(0)}$ be the lowest eigenenergy of $H^{(0)}$, and let
\begin{equation}
    \mathcal{E}^{(0)}\coloneqq \textrm{span}\left\{ |\varphi\rangle \in\mathbb{C}^D : H^{(0)}|\varphi\rangle = E^{(0)}_0|\varphi\rangle \right\}
\end{equation}
be the corresponding $d$-dimensional (degenerate) ground space of $H^{(0)}$, $d\leq D$. We let $\{\ket{j}\in\mathbb{C}^D\}_{j=1}^d$ denote an orthonormal basis for $\mathcal{E}^{(0)}$. We also let $\ket{\psi_1},\ket{\psi_2},\dotsc,\ket{\psi_d}\in\mathbb{C}^D$ be the perturbed orthonormal eigenvector (i.e., the eigenvectors of $H$) corresponding to the $d$-dimensional subspace $\mathcal{E}^{(0)}$, such that
\begin{equation}\label{A-eq:intermediate}
    H|\psi_j\rangle = \widetilde{E}_j |\psi_j\rangle \quad \forall~j\in\{1,2,\dotsc,d\},
\end{equation}
where $\widetilde{E}_j$ are the energies corresponding to $\ket{\psi_j}$. For the rest of this section, we are concerned with the shifted energies $E_j\coloneqq \widetilde{E}_j-E_0^{(0)}$. We also let
\begin{equation}
    \mathcal{E}\coloneqq\text{span}\left\{\ket{\psi_j}\in\mathbb{C}^D:j\in\{1,2,\dotsc,d\}\right\}
\end{equation}
be the $d$-dimensional subspace spanned by the vectors $\ket{\psi_1},\ket{\psi_2},\dotsc,\ket{\psi_d}$. We assume throughout that the subspaces $\mathcal{E}^{(0)}$ and $\mathcal{E}$ are not orthogonal, meaning that there does not exist a vector in $\mathcal{E}^{(0)}$ that is orthogonal to all vectors in $\mathcal{E}$, and vice versa. This condition can be guaranteed if the perturbation strength is not too high~\cite{blochTheoriePerturbationsEtats1958}.

Our specific goal in this section is to determine a perturbative expansion of $H$ in the $d$-dimensional subspace $\mathcal{E}$, i.e., we seek a perturbative expansion of the \emph{effective Hamiltonian}
\begin{equation}
    H_{\text{eff}}(H,d)\coloneqq\sum_{j=1}^d E_j\ketbra{\psi_j}{\psi_j}.
\end{equation}
Note that this is nothing more than the projection of the Hamiltonian $H$ onto the subspace $\mathcal{E}$, followed by a constant shift of $E_0^{(0)}$.

Let $P_0$ be the projector onto $ \mathcal{E}^{(0)}$. We can thus write $H^{(0)}$ as
\begin{equation}\label{eq:H_unpert}
    H^{(0)}=E_0^{(0)}P_0+Q,
\end{equation}
where $Q$ satisfies $P_0Q=QP_0=0$. Let us also define $ |\alpha_j\rangle $ to be the (non-normalized) state vectors reflecting the projections of the perturbed eigenstates onto the subspace $\mathcal{E}^{(0)}$, i.e.,
\begin{equation}\label{eq-proj_vectors}
    |\alpha_j\rangle \coloneqq P_0|\psi_j\rangle . 
\end{equation}
Note that these vectors are not necessarily orthonormal. But they are linearly independent, as we now prove.

\begin{lemma}[Linear independence]\label{lem-proj_vectors}
    Under the assumption that the vector spaces $\mathcal{E}^{(0)}$ and $\mathcal{E}$ are not orthogonal, the vectors defined in Eq.~\eqref{eq-proj_vectors} are linearly independent. Consequently, there exist vectors $\ket{\widetilde{\alpha}_j}$, $j\in\{1,2,\dotsc,d\}$, such that $\braket{\alpha_j}{\widetilde{\alpha}_{j'}}=\delta_{j,j'}$ for all $j,j'\in\{1,2,\dotsc,d\}$ and
    \begin{equation}\label{eq-proj_E0}
        P_0=\sum_{j=1}^d\ketbra{\alpha_j}{\widetilde{\alpha}_j}=\sum_{j=1}^d\ketbra{\widetilde{\alpha}_j}{\alpha_j}.
    \end{equation}
\end{lemma}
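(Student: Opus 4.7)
The plan is to prove the lemma in two parts, first establishing linear independence of $\{|\alpha_j\rangle\}_{j=1}^d$ and then constructing the dual basis $\{|\widetilde{\alpha}_j\rangle\}_{j=1}^d$ via Gram matrix inversion, after which the two resolutions of $P_0$ follow from a short verification on a basis.

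For linear independence, I would argue by contradiction. Suppose $\sum_{j=1}^d c_j |\alpha_j\rangle = 0$ for some scalars $c_j$, not all zero. Using $|\alpha_j\rangle = P_0 |\psi_j\rangle$ and linearity, this gives $P_0 |\phi\rangle = 0$ for $|\phi\rangle \coloneqq \sum_j c_j |\psi_j\rangle \in \mathcal{E}$. Since $\{|\psi_j\rangle\}$ is orthonormal, $|\phi\rangle \neq 0$. But $P_0|\phi\rangle = 0$ means $|\phi\rangle$ is orthogonal to every vector in $\mathcal{E}^{(0)}$, contradicting the stated non-orthogonality hypothesis between $\mathcal{E}^{(0)}$ and $\mathcal{E}$. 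Hence all $c_j = 0$, so $\{|\alpha_j\rangle\}$ is linearly independent. Since these are $d$ linearly independent vectors inside the $d$-dimensional space $\mathcal{E}^{(0)}$, they form a basis of $\mathcal{E}^{(0)}$.

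For the dual basis, I would introduce the Gram matrix $G$ with entries $G_{jk} \coloneqq \langle \alpha_j | \alpha_k \rangle$. Linear independence implies $G$ is positive definite, hence invertible. Defining
\begin{equation}
    |\widetilde{\alpha}_j\rangle \coloneqq \sum_{k=1}^d (G^{-1})_{kj}\, |\alpha_k\rangle,
\end{equation}
a direct computation gives $\langle \alpha_i | \widetilde{\alpha}_j \rangle = \sum_k G_{ik} (G^{-1})_{kj} = \delta_{i,j}$, as required. Note that $|\widetilde{\alpha}_j\rangle \in \mathcal{E}^{(0)}$ by construction.

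To establish the two representations of $P_0$, it suffices to check them on a basis. Acting with $\sum_j |\widetilde{\alpha}_j\rangle\langle \alpha_j|$ on any $|\alpha_\ell\rangle$ yields $\sum_j |\widetilde{\alpha}_j\rangle G_{j\ell} = |\alpha_\ell\rangle$ after substituting the definition and using $G^{-1} G = \II$; moreover the operator annihilates $\mathcal{E}^{(0)\perp}$ since $|\alpha_j\rangle \in \mathcal{E}^{(0)}$. Therefore $\sum_j |\widetilde{\alpha}_j\rangle\langle \alpha_j|$ agrees with $P_0$ on all of $\mathbb{C}^D$. Taking adjoints and using $P_0^\dagger = P_0$ together with the identity $\overline{(G^{-1})_{kj}} = (G^{-1})_{jk}$ (Hermiticity of $G$) gives the other form $\sum_j |\alpha_j\rangle\langle \widetilde{\alpha}_j|$. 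The main subtlety, and the one I would be most careful about, is extracting the statement $|\phi\rangle = 0$ from the non-orthogonality hypothesis; this is precisely the place where the condition on $\lambda$ being sufficiently small enters the argument, via the guarantee in Ref.~\cite{blochTheoriePerturbationsEtats1958} that $\mathcal{E}$ does not contain any vector orthogonal to $\mathcal{E}^{(0)}$.
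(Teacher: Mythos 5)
Your proof is correct and follows essentially the same route as the paper's: linear independence is deduced in the identical way from the non-orthogonality of $\mathcal{E}^{(0)}$ and $\mathcal{E}$, and the dual vectors are then constructed and verified to resolve $P_0$. The only (cosmetic) difference is that you build $\ket{\widetilde{\alpha}_j}$ by inverting the Gram matrix $G_{jk}=\braket{\alpha_j}{\alpha_k}$, whereas the paper sets $\ket{\widetilde{\alpha}_j}=(T^{-1})^{\dagger}\ket{j}$ with $T=\sum_j\ketbra{\alpha_j}{j}$ and its Moore--Penrose pseudo-inverse; the two constructions yield the same (unique) dual basis of $\mathcal{E}^{(0)}$.
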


\begin{proof}
    Consider the equation $c_1\ket{\alpha_1}+c_2\ket{\alpha_2}+\dotsb+c_d\ket{\alpha_d}=0$, where $c_1,c_2,\dotsc,c_d\in\mathbb{C}$. Using the definition of $\ket{\alpha_j}$, we obtain
    \begin{equation}
    P_0\left(c_1\ket{\psi_1}+c_2\ket{\psi_2}+\dotsb+c_d\ket{\psi_d}\right)=0.
    \end{equation}
    Now, because $\mathcal{E}^{(0)}$ and $\mathcal{E}$ are non-orthogonal, it follows that the vector $c_1\ket{\psi_1}+c_2\ket{\psi_2}+\dotsb+c_d\ket{\psi_d}$ is not in the orthogonal complement of $\mathcal{E}^{(0)}$, which means that it must be equal to the zero vector. Consequently, because the vectors $\ket{\psi_j}$ are linearly independent, we have that $c_1=c_2=\dotsb=c_d=0$. Hence, because the numbers $c_1,c_2,\dotsc,c_d$ were arbitrary, we conclude that the vectors $\ket{\alpha_j}$ are linearly independent.
    
    Now, consider the operator $T\coloneqq\sum_{j=1}^d\ketbra{\alpha_j}{j}$, which can be thought of as a matrix whose columns are equal to the vectors $\ket{\alpha_j}$. In particular, note that $\ket{\alpha_j}=T\ket{j}$ for all $j\in\{1,2,\dotsc,d\}$. Linear independence of the vectors $\ket{\alpha_j}$ implies that $T$ is invertible. (Here, and throughout this section, by \emph{inverse} we mean the Moore--Penrose pseudo-inverse, which is the inverse on the support of a linear operator.) Hence, letting
    \begin{equation}\label{eq-proj_vectors_tilde}
        \ket{\widetilde{\alpha}_j}\coloneqq (T^{-1})^{\dagger}\ket{j},\quad\forall~j\in\{1,2,\dotsc,d\},
    \end{equation}
    we find that
    \begin{equation}
        \braket{\alpha_j}{\widetilde{\alpha}_{j'}}=\bra{j}T^{\dagger}(T^{-1})^{\dagger}\ket{j'}=\braket{j}{j'}=\delta_{j,j'}\quad\forall~j,j'\in\{1,2,\dotsc,d\},
    \end{equation}
    as required.
    
    Finally, using Eq.~\eqref{eq-proj_vectors_tilde} and the fact that $T\ket{j}=\ket{\alpha_j}$ for all $j\in\{1,2,\dotsc,d\}$, we find that both $\sum_{j=1}^d\ketbra{\alpha_j}{\widetilde{\alpha}_j}$ and $\sum_{j=1}^d\ketbra{\widetilde{\alpha}_j}{\alpha_j}$ are equal to the identity operator for the subspace $\mathcal{E}^{(0)}$, which is precisely equal to the projection $P_0$. This completes the proof.
\end{proof}

\begin{lemma}[Invertibility]
    Consider the two linear operators
    \begin{align}
        \mathcal{U}&\coloneqq \sum_{j=1}^d \ketbra{\psi_j}{\widetilde{\alpha}_j},\label{eq-pert_U}\\
        \mathcal{A}&\coloneqq \lambda P_0 V\mathcal{U}.
    \end{align}
    The operator $\mathcal{U}$ is invertible on the subspace $\mathcal{E}$, and
    \begin{equation}\label{eq-UA_to_Heff}
        \mathcal{U}\mathcal{A}\mathcal{U}^{-1}=H_{\text{eff}}(H,d)=\sum_{j=1}^d E_j\ketbra{\psi_j}{\psi_j}.
    \end{equation}
\end{lemma}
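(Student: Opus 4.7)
The plan is to extract from the definition of $\mathcal{U}$ the natural bijection between $\mathcal{E}^{(0)}$ and $\mathcal{E}$ provided by the biorthogonality of Lemma~\ref{lem-proj_vectors}, and then to verify the operator identity by acting on the basis $\{\ket{\psi_k}\}_{k=1}^{d}$ of $\mathcal{E}$.

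First, I would observe that the biorthogonality $\braket{\widetilde{\alpha}_j}{\alpha_{k}} = \delta_{j,k}$ immediately yields $\mathcal{U}\ket{\alpha_k} = \ket{\psi_k}$, so that $\mathcal{U}$ restricts to a bijection $\mathcal{E}^{(0)}\to\mathcal{E}$. I would make this concrete by defining
\begin{equation*}
    \mathcal{U}^{-1} \coloneqq \sum_{j=1}^{d} \ketbra{\alpha_j}{\psi_j}.
\end{equation*}
Two short computations then give $\mathcal{U}\mathcal{U}^{-1} = \sum_j \ketbra{\psi_j}{\psi_j}$ (the projector onto $\mathcal{E}$) and $\mathcal{U}^{-1}\mathcal{U} = \sum_j \ketbra{\alpha_j}{\widetilde{\alpha}_j} = P_0$ by Eq.~\eqref{eq-proj_E0}, which establishes invertibility on $\mathcal{E}$ in the required sense.

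The key intermediate step will be the identity $\lambda P_0 V \ket{\psi_k} = E_k \ket{\alpha_k}$. I would obtain it by starting from the shifted eigenvalue equation $(H^{(0)} - E_0^{(0)} + \lambda V)\ket{\psi_k} = E_k \ket{\psi_k}$, applying $P_0$ on the left, and using $P_0(H^{(0)} - E_0^{(0)}) = 0$ (which follows at once from the decomposition $H^{(0)} = E_0^{(0)} P_0 + Q$ with $P_0 Q = 0$) to annihilate the unperturbed contribution.

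Finally, I would evaluate $\mathcal{U}\mathcal{A}\mathcal{U}^{-1}$ on each subspace separately. On $\ket{\psi_k}\in\mathcal{E}$, the chain $\mathcal{U}^{-1}\ket{\psi_k} = \ket{\alpha_k}$, followed by $\mathcal{A}\ket{\alpha_k} = \lambda P_0 V \mathcal{U}\ket{\alpha_k} = \lambda P_0 V \ket{\psi_k} = E_k \ket{\alpha_k}$, and then $\mathcal{U}\ket{\alpha_k} = \ket{\psi_k}$, yields $\mathcal{U}\mathcal{A}\mathcal{U}^{-1}\ket{\psi_k} = E_k\ket{\psi_k}$; on $\mathcal{E}^{\perp}$ the operator vanishes because $\mathcal{U}^{-1}$ already does. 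Together these prove Eq.~\eqref{eq-UA_to_Heff}. The only real subtlety is picking the ``right'' meaning of inverse: the definition of $\mathcal{U}$ naturally pairs $\ket{\widetilde{\alpha}_j}$ with $\ket{\psi_j}$, whereas the inverse useful for the identity pairs $\ket{\psi_j}$ back to $\ket{\alpha_j}$, and it is precisely the biorthogonality furnished by Lemma~\ref{lem-proj_vectors} that reconciles these two pairings.
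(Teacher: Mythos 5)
Your proof is correct and follows essentially the same route as the paper: your explicit inverse $\sum_{j}\ketbra{\alpha_j}{\psi_j}$ coincides with the paper's $TS^{\dagger}$, and your key identity $\lambda P_0 V\ket{\psi_k}=E_k\ket{\alpha_k}$ is the vector-level form of the paper's operator identity $P_0(\lambda V)=P_0H-E_0^{(0)}P_0$. The only cosmetic difference is that you verify the conjugation identity basis vector by basis vector rather than at the operator level, which changes nothing of substance.
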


\begin{proof}
    First of all, note that
    \begin{equation}\label{eq-U_proj}
        \mathcal{U}\ket{\alpha_j}=\ket{\psi_j}\quad\forall~j\in\{1,2,\dotsc,d\},
    \end{equation}
    which holds due to Lemma~\ref{lem-proj_vectors}. Next, using Eq.~\eqref{eq-proj_vectors_tilde}, we find that $\mathcal{U}=\sum_{j=1}^d\ketbra{\psi_j}{j}T^{-1}=ST^{-1}$, where $S\coloneqq\sum_{j=1}^{d}\ketbra{\psi_j}{j}$. Since the vectors $\ket{\psi_j}$ are linearly independent, the operator $S$ is invertible. Furthermore, because the vectors $\ket{\psi_j}$ are orthonormal, $S^{-1}=S^{\dagger}=\sum_{j=1}^d\ketbra{j}{\psi_j}$. Therefore, the inverse of $\mathcal{U}$ exists and is equal to $\mathcal{U}^{-1}=TS^{\dagger}$.
    Next, using Eq.~\eqref{eq:H_unpert}, we have that
    \begin{equation}
        P_0(\lambda V)=P_0(H-E_0^{(0)}P_0-Q)=P_0H-E_0^{(0)}P_0,
    \end{equation}
    so that
    \begin{equation}
        \mathcal{A}=P_0H\mathcal{U}-E_0^{(0)}P_0\mathcal{U}.
    \end{equation}
    However, using Eq.~\eqref{eq-proj_E0}, we find that $P_0\mathcal{U}=P_0$, which means that
    \begin{equation}
        \mathcal{A}=P_0H\mathcal{U}-E_0^{(0)}P_0.
    \end{equation}
    For this reason, because $\mathcal{U}\mathcal{U}^{-1}$ is equal to the projection onto the subspace $\mathcal{E}$, we obtain
    \begin{align}
        \mathcal{U}\mathcal{A}\mathcal{U}^{-1}&=\mathcal{U}P_0H\mathcal{U}\mathcal{U}^{-1}-E_0^{(0)}\mathcal{U}P_0\mathcal{U}^{-1}\\
        \nonumber
        &=\mathcal{U}P_0\sum_{j=1}^d\widetilde{E}_j\ketbra{\psi_j}{\psi_j}-E_0^{(0)}\mathcal{U}P_0\mathcal{U}^{-1}\\
        &=\sum_{j=1}^d\widetilde{E}_j\ketbra{\psi_j}{\psi_j}-E_0^{(0)}\mathcal{U}P_0\mathcal{U}^{-1}.
         \nonumber
    \end{align}
    Finally, using Eq.~\eqref{eq-proj_E0}, along with Eq.~\eqref{eq-U_proj} and the fact that $\bra{\widetilde{\alpha}_j}\mathcal{U}^{-1}=\bra{j}T^{-1}TS^{\dagger}=\bra{\psi_j}$, we find that
    \begin{equation}
        \mathcal{U}P_0\mathcal{U}^{-1}=\sum_{j=1}^d\mathcal{U}\ketbra{\alpha_j}{\widetilde{\alpha}_j}\mathcal{U}^{-1}=\sum_{j=1}^d\ketbra{\psi_j}{\psi_j}.
    \end{equation}
    Therefore,
    \begin{equation}
        \mathcal{U}\mathcal{A}\mathcal{U}^{-1}=\sum_{j=1}^d(\widetilde{E}_j-E_0^{(0)})\ketbra{\psi_j}{\psi_j}=\sum_{j=1}^d E_j\ketbra{\psi_j}{\psi_j},
    \end{equation}
    as required.
\end{proof}

\subsection{Perturbative expansion of \texorpdfstring{$\mathcal{U}$}{U}}
\label{A:expansion of U}

We now derive a perturbative expansion for $\mathcal{U}$, from which we obtain an expansion for $\mathcal{A}$, which will allow us to obtain a perturbative expansion for the effective Hamiltonian $H_{\text{eff}}(H,d)$ via Eq.~\eqref{eq-UA_to_Heff}. Crucial to obtaining the perturbative expansion of $\mathcal{U}$ is the following fact.

\begin{lemma}[Towards a perturbative expansion]
    The linear operator $\mathcal{U}$ defined in Eq.~\eqref{eq-pert_U} satisfies
    \begin{equation} \label{A-eq:governing-equation}
        \mathcal{U} = P_0 + A^{-1}Q_0 \lambda (V\mathcal{U} - \mathcal{U}V\mathcal{U}),
    \end{equation}
    where $Q_0\coloneqq\II -P_0$ and $A\coloneqq E^{(0)}_0\II - H^{(0)}$.
\end{lemma}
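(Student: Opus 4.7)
The plan is to reduce the operator identity to a check on the vectors $\ket{\alpha_j}$, $j\in\{1,\dotsc,d\}$, which span $\mathcal{E}^{(0)}$, and then close the loop using the Schrödinger equation together with the defining relation for the effective Hamiltonian. As a preliminary, I would note that both sides of Eq.~\eqref{A-eq:governing-equation} are supported on $\mathcal{E}^{(0)}$: because each $\ket{\widetilde{\alpha}_j}$ lies in $\mathcal{E}^{(0)}$ by Lemma~\ref{lem-proj_vectors}, we have $\mathcal{U}Q_0=0$ and $\mathcal{U}P_0=\mathcal{U}$; on the right-hand side, $P_0Q_0=0$ and the other two terms each end with a factor of $\mathcal{U}$. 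Hence it is enough to check the identity after applying it to an arbitrary $\ket{\alpha_j}$, for which Eq.~\eqref{eq-U_proj} gives the shortcut $\mathcal{U}\ket{\alpha_j}=\ket{\psi_j}$.

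Next, I would rewrite the Schrödinger equation $H\ket{\psi_j}=\widetilde{E}_j\ket{\psi_j}$ in the form
\begin{equation}
    A\ket{\psi_j}=\lambda V\ket{\psi_j}-E_j\ket{\psi_j},
\end{equation}
using $A=E_0^{(0)}\II-H^{(0)}$ and $E_j=\widetilde{E}_j-E_0^{(0)}$. Since $H^{(0)}=E_0^{(0)}P_0+Q$ gives $AP_0=P_0A=0$, the pseudo-inverse $A^{-1}$ is well-defined on the range of $Q_0$ and satisfies $A^{-1}A=Q_0$ together with $A^{-1}=A^{-1}Q_0$. Applying $A^{-1}$ to both sides then yields the half-finished recursion
\begin{equation}
    Q_0\ket{\psi_j}=A^{-1}Q_0\lambda V\ket{\psi_j}-E_j\,A^{-1}Q_0\ket{\psi_j}.
\end{equation}

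The main obstacle is that this relation still carries the scalar $E_j$, whereas Eq.~\eqref{A-eq:governing-equation} contains the operator $\mathcal{U}V\mathcal{U}$ in that slot. My plan is to remove the $E_j$ using the effective-Hamiltonian relation of Eq.~\eqref{eq-UA_to_Heff}: combining $E_j\ket{\psi_j}=\mathcal{U}\mathcal{A}\mathcal{U}^{-1}\ket{\psi_j}$ with $\mathcal{U}^{-1}\ket{\psi_j}=\ket{\alpha_j}$, $\mathcal{A}=\lambda P_0V\mathcal{U}$, and $\mathcal{U}P_0=\mathcal{U}$ collapses to the clean operator-valued identity $E_j\ket{\psi_j}=\lambda\,\mathcal{U}V\ket{\psi_j}$. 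Substituting this into the previous display and assembling $\mathcal{U}\ket{\alpha_j}=\ket{\psi_j}=P_0\ket{\alpha_j}+Q_0\ket{\psi_j}$, while writing $V\ket{\psi_j}=V\mathcal{U}\ket{\alpha_j}$ and $\mathcal{U}V\ket{\psi_j}=\mathcal{U}V\mathcal{U}\ket{\alpha_j}$, delivers exactly Eq.~\eqref{A-eq:governing-equation} applied to $\ket{\alpha_j}$. Since the $\ket{\alpha_j}$ span $\mathcal{E}^{(0)}$ and both sides of the target identity vanish on the orthogonal complement of $\mathcal{E}^{(0)}$, this upgrades to the desired operator identity.
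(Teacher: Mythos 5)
Your proposal is correct and follows essentially the same route as the paper's proof: both hinge on the Schr\"odinger equation, the key identity $E_j\ket{\psi_j}=\lambda\,\mathcal{U}V\ket{\psi_j}$, the pseudo-inversion of $A$ on the range of $Q_0$, and the split $\ket{\psi_j}=P_0\ket{\psi_j}+Q_0\ket{\psi_j}$; your check on the spanning set $\{\ket{\alpha_j}\}$ followed by the support argument is just the vector-level version of the paper's multiplication by $\bra{\widetilde{\alpha}_j}$ and summation over $j$. The only cosmetic difference is that you obtain $E_j\ket{\psi_j}=\lambda\,\mathcal{U}V\ket{\psi_j}$ by citing the already-established relation $\mathcal{U}\mathcal{A}\mathcal{U}^{-1}=H_{\text{eff}}$ (which is non-circular), whereas the paper rederives it by projecting the Schr\"odinger equation with $P_0$ and applying $\mathcal{U}$.
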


\begin{proof}
    From Eq.~\eqref{A-eq:intermediate}, we have that
\begin{equation} \label{A-eq:intermediate_2}
    \left( H-E^{(0)}_0\II \right) |\psi_j\rangle = E_j |\psi_j\rangle \quad\forall~j\in\{1,2,\dotsc,d\}.
\end{equation}
Multiplying both sides of this equation from the left by $P_0$ gives us
\begin{align}
    P_0(H-E_0^{(0)}\II)\ket{\psi_j}&=P_0E_j\ket{\psi_j},\\
    \Rightarrow P_0(H^{(0)}+\lambda V-E_0^{(0)}\II)\ket{\psi_j}&=E_j\ket{\alpha_j},\\
    \Rightarrow \lambda P_0V\ket{\psi_j}&=E_j\ket{\alpha_j},
\end{align}
for all $j\in\{1,2,\dotsc,d\}$, where to obtain the right-hand side of the second line we used Eq.~\eqref{eq-proj_vectors}. Now, multiplying both sides of the last line by $\mathcal{U}$ gives us
\begin{align}
    \lambda\mathcal{U}P_0V\ket{\psi_j}&=E_j\mathcal{U}\ket{\alpha_j},\\
    \Rightarrow \lambda\mathcal{U}V\ket{\psi_j}&=E_j\ket{\psi_j}, \label{A-eq:luv=e}
\end{align}
for all $j\in\{1,2,\dotsc,d\}$, where to obtain the left-hand side of the last line we used the fact that $\mathcal{U}P_0=\mathcal{U}$, which can be straightforwardly verified using Eq.~\eqref{eq-proj_E0}. For the right-hand side of the last line, we used Eq.~\eqref{eq-U_proj} the fact that $\mathcal{U}\ket{\alpha_j}=\ket{\psi_j}$ for all $j\in\{1,2,\dotsc,d\}$. Using Eq.~\eqref{A-eq:intermediate_2}, we find that Eq.~\eqref{A-eq:luv=e} can be written as
\begin{align}
    &\lambda\mathcal{U}V\ket{\psi_j}=(H^{(0)}-E_0^{(0)}\II)\ket{\psi_j},\\
    &\Rightarrow (H-E_0^{(0)}\II-\lambda\mathcal{U}V)\ket{\psi_j}=0,
\end{align}
for all $j\in\{1,2,\dotsc,d\}$. Multiplying the last line by $\bra{\widetilde{\alpha}_j}$ from the right leads to
\begin{equation}
    (H-E_0^{(0)}\II-\lambda\mathcal{U}V)\ketbra{\psi_j}{\widetilde{\alpha}_j}=0\quad\forall~j\in\{1,2,\dotsc,d\}.
\end{equation}
Summing over all $j\in\{1,2,\dotsc,d\}$ therefore leads to
\begin{equation}
    (H-E_0^{(0)}\II-\lambda\mathcal{U}V)\mathcal{U}=0,
\end{equation}
which is equivalent to
\begin{equation}
    \left( E^{(0)}_0\II - H^{(0)} \right) \mathcal{U} = \lambda V \mathcal{U} - \lambda \mathcal{U} V \mathcal{U}.
\end{equation}
Letting $Q_0\coloneqq\II-P_0$ be the projector onto the span of the excited states of $H^{(0)}$, we have
\begin{equation} \label{A-eq:U}
    \mathcal{U} = P_0 \mathcal{U} + Q_0 \mathcal{U},
\end{equation}
which means that 
\begin{align}
    \left( E^{(0)}_0\II - H^{(0)} \right) \mathcal{U} & = \left( E^{(0)}_0\II - H^{(0)} \right) (P_0 \mathcal{U} + Q_0 \mathcal{U})  \\ & 
    = 0 + \left( E^{(0)}_0\II - H^{(0)} \right) Q_0 \mathcal{U},
    \nonumber
\end{align}
leading to 
\begin{equation}
    \left( E^{(0)}_0\II-H^{(0)} \right) Q_0 \mathcal{U} = \lambda V\mathcal{U} - \lambda \mathcal{U}V\mathcal{U}.
\end{equation}
Next, using the fact that $P_0\mathcal{U}=P_0$, we obtain
\begin{equation}
\begin{split}
    Q_0 (V\mathcal{U} - \mathcal{U}V\mathcal{U}) & 
    = (\II - P_0) (V\mathcal{U} - \mathcal{U}V\mathcal{U}) \\ & 
    = V\mathcal{U} - \mathcal{U}V\mathcal{U} + P_0 (\mathcal{U}V\mathcal{U} - V\mathcal{U}) \\ & 
    = V\mathcal{U} - \mathcal{U}V\mathcal{U}.
\end{split}
\end{equation}
Then, because $H^{(0)} - E^{(0)}_0\II$ has a well defined inverse on $\mathcal{E}^{(0)\perp}$, we can write 
\begin{equation}
    Q_0 \mathcal{U} = \left(E^{(0)}_0\II - H^{(0)} \right)^{-1} Q_0 \lambda (V\mathcal{U} - \mathcal{U}V\mathcal{U}).
\end{equation}
Inserting this result into Eq.~\eqref{A-eq:U} yields 
\begin{equation}
    \mathcal{U} = P_0 \mathcal{U} +\left( E^{(0)}_0\II - H^{(0)} \right)^{-1} Q_0 \lambda (V\mathcal{U} - \mathcal{U}V\mathcal{U}),
\end{equation}
which simplifies to
\begin{equation} 
    \mathcal{U} = P_0 +\left(E^{(0)}_0\II -  H^{(0)}\right)^{-1} Q_0 \lambda (V\mathcal{U} - \mathcal{U}V\mathcal{U}),
\end{equation}
as required.
\end{proof}

We have, therefore, obtained the governing equation for $\mathcal{U}$. This equation is well suited for expansion in powers of $\lambda$ and can be expanded as 
\begin{equation}
    \mathcal{U} = \sum\limits_{m=0}^\infty \mathcal{U}^{(m)},
\end{equation}
with $\mathcal{U}^{(m)}$ being the $m^{\text{th}}$-order term.
Substituting $\mathcal{U}$ into Eq.~\eqref{A-eq:governing-equation} gives the recurrence relations
\begin{align}
    \mathcal{U}^{(0)} & = P_0 ,\\
    \mathcal{U}^{(m)} & = A^{-1} Q_0\lambda \left( V\mathcal{U}^{(m-1)} - \sum\limits_{p=1}^{m-1} \mathcal{U}^{(p)}V\mathcal{U}^{(m-p-1)}\right),
\end{align}
with $A = E^{(0)}_0\II - H^{(0)}$. 
\begin{figure}
    \centering
    \includegraphics{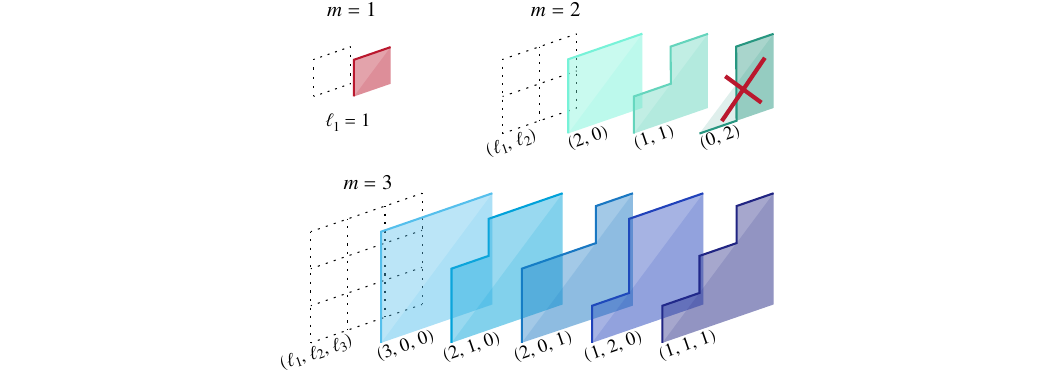}
    \caption{Visual representation of the allowed set of indices in the expansion of $\mathcal{U}$ as proposed by Bloch in Ref.~\cite{blochTheoriePerturbationsEtats1958}. All allowed diagrams are shown for $m\in\{1, 2, 3\}$ and additionally, one not contributing diagram for $m=2$ since it goes below the diagonal (it does not fulfill the third property in Eq.~\eqref{A-eq:U-indexes}).}
    \label{fig:staircase-diagrams}
\end{figure}
Let 
\begin{equation}\label{eq-S_ell}
    S^\ell \coloneqq \begin{cases}
        - P_0 \quad & \text{if } \ell=0, \\
        A^{-\ell}Q_0 & \text{if } \ell>0.
    \end{cases}
\end{equation} 
Then, 
\begin{equation} \label{Aeq:U-expansion}
    \mathcal{U}^{(m)} = \lambda^m {\sum}' S^{\ell_1} V S^{\ell_2} V \dots V S^{\ell_n} V P_0,
\end{equation}
where the sum is over all sets of indices $\{\ell_i\}_{i=1}^m$ that fulfill the following conditions~\cite{blochTheoriePerturbationsEtats1958}:
\begin{align} \label{A-eq:U-indexes}
    & \ell_i \geq 0 \quad \forall~i \in \{1, \dotsc, m \},\\
    & \ell_1 + \dots + \ell_{m-1} = m, \\
    & \ell_1 + \dots + \ell_{p} \geq p \quad \forall~p \in \{1, \dotsc, m-1 \}.
\end{align}

\begin{lemma}[Explicit form of terms]
    For $\ell>0$, the terms $S^\ell$ defined in Eq.~\eqref{eq-S_ell} are given by
    \begin{equation}
        S^{\ell}=\sum_{j\neq 0} (E_0^{(0)}-E_j^{(0)})^{-\ell}P_j\quad (\ell>0),
    \end{equation}
    where $P_j$ is the spectral projection of $H^{(0)}$ corresponding to the energy $E_j^{(0)}$.
\end{lemma}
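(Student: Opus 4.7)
The plan is to prove this by the spectral theorem applied to $H^{(0)}$, noting that all the operators $A$, $P_0$, and $Q_0$ are functions of $H^{(0)}$ and therefore share its spectral projections, which greatly simplifies the calculation.

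First, I would write out the spectral decomposition $H^{(0)}=\sum_j E_j^{(0)} P_j$, where $\{P_j\}$ are mutually orthogonal projectors onto the eigenspaces of $H^{(0)}$, satisfying $P_j P_{j'}=\delta_{j,j'} P_j$ and $\sum_j P_j = \II$. In this notation, $P_0$ is exactly the projector onto the ground space with energy $E_0^{(0)}$, and therefore $Q_0=\II-P_0=\sum_{j\neq 0} P_j$. Substituting the spectral form of $H^{(0)}$ into $A=E_0^{(0)}\II-H^{(0)}$ yields
\begin{equation}
    A = \sum_{j} \left(E_0^{(0)}-E_j^{(0)}\right) P_j = \sum_{j\neq 0} \left(E_0^{(0)}-E_j^{(0)}\right) P_j,
\end{equation}
where the $j=0$ term drops out because its coefficient vanishes. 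This already makes manifest that $A$ annihilates the ground space and is invertible on its complement $\mathcal{E}^{(0)\perp}$.

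Next, I would invoke functional calculus: for any integer $\ell>0$, the Moore-Penrose pseudo-inverse $A^{-\ell}$ (which is what is meant by $A^{-1}$ in this section, as noted in the proof of Lemma on invertibility) equals
\begin{equation}
    A^{-\ell} = \sum_{j\neq 0} \left(E_0^{(0)}-E_j^{(0)}\right)^{-\ell} P_j,
\end{equation}
since raising a diagonal operator to a negative integer power corresponds to inverting its nonzero eigenvalues. Multiplying by $Q_0=\sum_{j'\neq 0} P_{j'}$ and using orthogonality $P_j P_{j'} = \delta_{j,j'} P_j$ collapses the double sum to a single sum, giving precisely
\begin{equation}
    S^{\ell} = A^{-\ell} Q_0 = \sum_{j\neq 0} \left(E_0^{(0)}-E_j^{(0)}\right)^{-\ell} P_j,
\end{equation}
as claimed.

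I expect no substantial obstacle here; the argument is essentially a one-line application of the spectral theorem once the decomposition is set up. The only subtle point worth spelling out is the interpretation of $A^{-1}$ as a pseudo-inverse, restricted to the nonzero spectrum of $A$, which aligns with the convention established earlier in the appendix and with the fact that $A^{-1}$ is always applied to states in the range of $Q_0$. Everything else is bookkeeping with orthogonal projectors.
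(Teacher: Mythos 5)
Your proof is correct and follows essentially the same route as the paper: expand $A=E_0^{(0)}\II-H^{(0)}$ in the spectral projections of $H^{(0)}$, take the pseudo-inverse eigenvalue-wise, and observe that multiplying by $Q_0$ changes nothing since $A^{-\ell}$ is already supported on the complement of the ground space. The only cosmetic difference is that you collapse $A^{-\ell}Q_0$ via projector orthogonality while the paper simply notes $A^{-\ell}Q_0=A^{-\ell}$; these are the same observation.
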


\begin{proof}
To obtain this result, we use the fact that
\begin{equation}
    H^{(0)}=E_0^{(0)}P_0+\sum_{j\neq 0}E_j^{(0)}P_j,\quad P_0+\sum_{j\neq 0}P_j=\II.
\end{equation}
This implies that
\begin{equation}
    E_0^{(0)}\II-H^{(0)}=\sum_{j\neq 0}(E_0^{(0)}-E_j^{(0)})P_j,
\end{equation}
which in turn implies that
\begin{equation}
    A^{-\ell}=(E_0^{(0)}\II-H^{(0)})^{-\ell}=\sum_{j\neq 0}(E_0^{(0)}-E_j^{(0)})^{-\ell}P_j,
\end{equation}
and because this is supported entirely on the orthogonal complement of the ground space of $H^{(0)}$, we have that $A^{-\ell}Q_0=A^{-\ell}$, proving the desired result.
\end{proof}

From the expansion of $\mathcal{U}$, we can write the expansion of $\mathcal{A}$ as
\begin{equation}
    \mathcal{A} = \sum\limits_{m=1}^\infty \mathcal{A}^{(m)}
\end{equation}
with
\begin{equation} \label{Aeq:expansion-A}
    \mathcal{A}^{(m)} = \lambda^m {\sum}''P_0 V S^{\ell_1} V S^{\ell_2} V \dots V S^{\ell_{m-1}} V P_0,
\end{equation}
with the sum over all sets of $m-1$ indices $\{\ell_i\}_{i=1}^{m-1}$ such that
\begin{align} \label{Aeq:Bloch-A-indexes}
    & \ell_i \geq 0 \quad \forall~i\in\{1, 2, \ldots, m-1 \}, \\
    & \ell_1 + \dots + \ell_{m-1} = m-1 \geq 0, \\
    & \ell_1 + \dots + \ell_{p} \geq p \quad \forall~p \in \{1, \dotsc, m-2 \}.
\end{align}

Finally, using Eq.~\eqref{eq-UA_to_Heff}, the perturbative expansion of $H_{\text{eff}}(H,d)$, up to $m^{\text{th}}$ order in perturbation theory, is
\begin{equation}\label{Aeq:def-Heff}
    H_{\text{eff}}(H,d)=\mathcal{U}\mathcal{A}^{(\leq m)}\mathcal{U}^{-1}+\mathcal{O}(\lambda^{m+1}),\quad \mathcal{A}^{(\leq m)}\coloneqq\sum_{j=1}^m \mathcal{A}^{(j)}.
\end{equation}
As shown in Ref.~\cite{blochTheoriePerturbationsEtats1958} (see also Ref.~\cite[Appendix~B]{jordanPerturbativeGadgetsArbitrary2008a}), the following condition is sufficient to guarantee convergence of this perturbative expansion:
\begin{equation} 
\label{A-eq:convergence-condition}
    \| \lambda V \| < \frac{\gamma}{4},
\end{equation}
where $\gamma$ is the gap between $ E_0^{(0)} $ and the energy of the first excited state of $ H^{(0)}$. We note that this bound is not tight~\cite{caoEfficientOptimizationPerturbative2017} and that significantly larger $\lambda$ might still lead to similar results, even outside of the regime in which the perturbative expansion converges~\cite{bravyiQuantumSimulationManyBody2008,bauschGadgets2020}.

\section{Proof of Theorem \ref{theorem:main-result}} 
\label{A:proof}

In this section, we show how to get from Definition~\ref{def:hamiltonians} to Theorem~\ref{theorem:main-result} using the expansion presented in the previous section. 
In particular, we analyze what the obtained formulas in Appendix~\ref{A:Bloch-expansion} imply for the construction of the novel perturbative gadget introduced in Definition~\ref{def:hamiltonians}, and thereby prove Theorem~\ref{theorem:main-result}. 
The proof also serves as the groundwork for the extensions presented in Appendix~\ref{A:extensions} and the generalization in Section~\ref{sec:recipe}. 

For convenience, let us recall the definition of our gadget Hamiltonian from Definition~\ref{def:hamiltonians}. Given a $k$-body target Hamiltonian $H^{\text{target}}$ acting on a Hilbert space $\mathcal{H}^{\text{target}}$ with $\operatorname{dim}(\mathcal{H}^{\text{target}})=2^n$, the associated gadget Hamiltonian is
\begin{equation}
    H^{\text{gad}} \coloneqq \sum\limits_{ s = 1}^r H^\text{aux}_s + \lambda \sum\limits_{ s = 1}^r V_s,
\end{equation}
where
\begin{align} 
    H^\text{aux}_s &\coloneqq \sum\limits_{ j = 1}^k \frac{1}{2} \left( \II_{s,j} - Z_{s,j}^{\text{aux}} \right)
    = \sum\limits_{ j = 1}^k \ketbra{1}{1}_{s,j}^{\text{aux}}, \\
    V_s &\coloneqq \sum\limits_{j=1}^k \tilde{c}_{s,j} \sigma_{s,j}^\text{target} \otimes X_{s,j}^\text{aux} \otimes X_{s,(j+1)\text{ mod }k}^\text{aux} \label{eq:def-V_2}
\end{align}
and $\tilde{c}_{s,j}=-(-1)^k c_s$ if $j=1$ or $\tilde{c}_{s,j}=1$ otherwise.
We consider $\sigma_{s,j} \in \left\{ \hat{\boldsymbol{n}} \cdot \boldsymbol{\sigma}:\hat{\boldsymbol{n}} \in \mathds{R}^3,\, \hat{\boldsymbol{n}}\cdot\hat{\boldsymbol{n}}=1,\, \boldsymbol{\sigma}= (X, Y, Z) \right\}$.
We refer to the Hilbert spaces of each of the auxiliary registers as $\mathcal{H}^\text{aux}_s$, where $\operatorname{dim}(\mathcal{H}^\text{aux}_s) = 2^k$. 
The complete Hilbert space acted upon by our gadget Hamiltonian is thus
\begin{equation}
    \mathcal{H}^\text{total} \simeq \mathcal{H}^\text{target} \otimes \mathcal{H}^\text{aux}_1 \otimes \ldots \otimes \mathcal{H}^\text{aux}_r.
\end{equation}
Our goal in this section is to use the perturbative expansion of the previous section, by taking $H^{(0)}\equiv \sum_{s=1}^r H_s^{\text{aux}}$ and $V\equiv \sum_{s=1}^r V_s$, and showing that 
\begin{equation}
    H_{\text{eff}}(H^{\text{gad}},2^n) = a H^\text{target}\otimes\ketbra{\phi_0}{\phi_0}  + b\Pi + O_\text{err},
\end{equation}
where $\Pi$ is the projector onto the low-energy subspace of $H^{\text{gad}}$, $\ket{\phi_0}\in\mathcal{H}^{\text{aux}}$ is a state vector, $a$ is a scaling factor, and $b$ is a shift on the whole subspace of interest. We remark that the above equation is in accordance with \cite[Definition~11]{bauschGadgets2020}, which provides a general definition of what it means for one Hamiltonian to mimic another Hamiltonian in its low-energy subspace.

\begin{figure}
    \centering
    \includegraphics{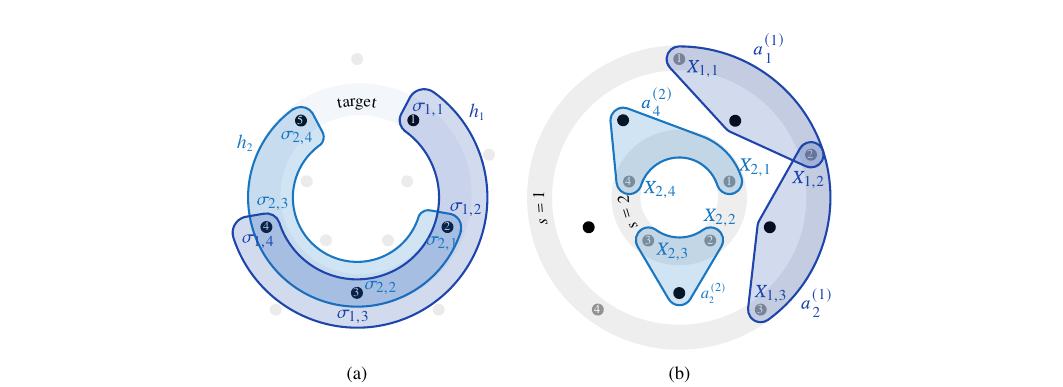}
    \caption[Registers of the gadget Hamiltonian]{
    Illustration of the different relevant qubit registers in the construction of the gadget Hamiltonian for the case of $n=5$, $k=4$, and $r=2$. 
    Shown in (a) are the two target Hamiltonian terms $h_1$ and $h_2$ on the five-qubit target register with the corresponding single-qubit operators $\sigma_{s,j}$. 
    (b) shows the two four-qubit auxiliary registers, and two of the $a^{(s)}_j$ operators displayed on each.
    As displayed for $\sigma_{2,3}$, the indices $(s,j)$ for operators on the target register should always be understood by referring to $h_s$. Indeed, $\sigma_{2,3}$ acts on qubit $4$. On the other hand, the indices for operators on the auxiliary qubits are more straightforward: $X_{2,3}$ is a Pauli $X$ on the third qubit of the second register.}
    \label{fig:registers}
\end{figure}

\subsection{Useful properties}
\label{A:useful_properties}
For illustrative purposes, and to help in some coming steps of the derivation, let us write down the lowest-order terms of the expansion of $\mathcal{A}$ from Eq.~\eqref{Aeq:expansion-A} explicitly. This is \begin{eqnarray}
    \mathcal{A}^{(1)} &=& \lambda P_0 V P_0, \\
    \mathcal{A}^{(2)} & =& \lambda^2 P_0 V S^1 V P_0 , \\
    \mathcal{A}^{(3)} & =& \lambda^3 P_0 V S^1 V S^1 V P_0 + \lambda^3 P_0 V S^2 V S^0 V P_0.
\end{eqnarray}
Let us consider a few properties of the construction defined in Definition~\ref{def:hamiltonians}, which we use below. 
First of all, note that the gadget Hamiltonian is three-body by construction.
Then, the unperturbed Hamiltonian acting solely on the auxiliary registers can be rewritten as
\begin{equation}
    H^{\text{aux}} = \sum\limits_{s=1}^r \sum\limits_{j=1}^k |1\rangle \! \<1|_{s,j}
\end{equation} 
and its ground state is the computational basis state vector $|0\rangle^{\otimes rk}$ with corresponding eigenvalue $0$. 
Consequently, the ground space of the unperturbed Hamiltonian is given by
\begin{equation}
    \mathcal{E}^{(0)} = \textrm{span} \left\{ |\varphi\rangle \otimes |0\rangle^{\otimes rk}:  |\varphi\rangle \in \mathcal{H}^\text{target}\right\},
\end{equation}
with the corresponding projector
\begin{equation}
    P_0 = \II^{\otimes n} \otimes \left( |0\rangle \! \<0| \right)^{\otimes rk},
\end{equation}
whose support is $2^n$-dimensional, as for the original target Hamiltonian.
Furthermore, the energy of any state in the computational basis is given by the Hamming weight of the state with respect to the auxiliary register.
Lastly, let us define 
\begin{equation}
    a^{(s)}_j \coloneqq X_{s,j}^{\text{aux}} \otimes X_{s,(j+1)\text{ mod }k}^{\text{aux}}
\end{equation}
as the parts of the perturbation from Eq.~\eqref{eq:def-V_2} acting on the auxiliary registers.
Then, these operators fulfill the useful relations
\begin{equation} 
    \left[ a^{(s)}_i, a^{(p)}_j \right] = 0 \quad \forall~i,j,s,p,\qquad  \prod_{j=1}^k a^{(s)}_j = \bigotimes_{j=1}^k X_{s,j}^2 = \II^{\otimes k} \quad \forall~s. \label{Aeq:aj-properties}
\end{equation}
The first is a direct consequence of having only Pauli-$X$ operators in the construction of $a^{(s)}_j$. The second is due to the fact that the operators $a_j^{(s)}$ are constructed in a cyclic manner and the fact that the Pauli-$X$ operator squares to the identity.

\subsection{Simplified target Hamiltonian}
\label{A:simplified_target_H}
Now, let us first consider the simplified example of a target Hamiltonian comprising only a single term with unit norm, that is
\begin{equation}
    H^\text{target} = \sigma_{1} \sigma_{2} \ldots \sigma_{k}.
\end{equation}
For this case, we can omit the subscript $s$. The corresponding gadget Hamiltonian is then given by
\begin{equation}
    H^\text{gad} = H^\text{aux} + \lambda V
\end{equation}
with
\begin{equation}
    H^\text{aux} = \sum\limits_{j=1}^k \frac{1}{2}(\II - Z_{j}) 
\end{equation}
and
\begin{equation}
    V = \sum\limits_{j=1}^k \left(-(-1)^k\right)^{\delta_{1j}} \sigma_{j} \otimes X_{j} \otimes X_{(j+1)\text{ mod }k}.
\end{equation}
The $2^{k}$-dimensional unperturbed ground space is 
\begin{equation}
    \mathcal{E}^{(0)} = \textrm{span} \left\{ |\varphi\rangle^\text{target} \otimes | 0 \rangle^{\otimes k}: |\varphi\rangle \in \mathcal{H}^\text{target}\right\}.
\end{equation}
The projector on the unperturbed ground space can consequently be written as 
\begin{equation} \label{eq:gadget-ground-projector}
    P_0 = \II^{\otimes k} \otimes \left( |0\rangle \! \<0| \right)^{\otimes k}.
\end{equation}
First, let us look at the auxiliary part of the Hamiltonian to understand its effect.
It can be interpreted as a penalization on flipped qubits: each term has a $0$ contribution if the affected qubit is in the $|0\rangle$ state but has a penalty of $1$ if the qubit is in $|1\rangle$.
The gap is then $\gamma = 1$ and taking into account that $\|V\| = k$, the convergence of the expansion in Eq.~\eqref{Aeq:U-expansion} is guaranteed for 
$\lambda \leq \frac{1}{4k}$. 

Let us now study the first terms in the expansion of $\mathcal{A}$ from Eq.~\eqref{Aeq:expansion-A}, i.e.,
\begin{equation}
    \mathcal{A}^{(\leq 2)} = \lambda P_0 V P_0 + \lambda^2 P_0 V S^1 V P_0.
\end{equation}
Every term of this expansion is sandwiched between two projectors $P_0$. 
The only terms with a non-trivial contribution are then those that take a state from $ \mathcal{E}^{(0)} $ and return it to $ \mathcal{E}^{(0)}$, i.e., those that leave the auxiliary register in the all-zero state.
That is not the case for $ \mathcal{A}^{(1)} $ in which the application of a single term of the perturbation $V$ necessarily kicks the state out of the ground space by flipping two auxiliary qubits. Consequently, $\mathcal{A}^{(1)}=0$.
For $ \mathcal{A}^{(2)} $, each term of the first $V$ excites the system to a state with two flipped auxiliary qubits.
The only contribution from $S^1$ is then due to the projector on the second excited subspace $P_2$ with corresponding eigenenergy $ E^{(0)}_2 = 2$.
This gives the denominator in $S^1$, resulting in
\begin{equation}
    \mathcal{A}^{(\leq 2)} = -\frac{\lambda^2}{2} P_0 V^2 P_0.
\end{equation}
Now, let us examine $V^2$. This two-fold application of the perturbation $V$ has to act on the same two qubits, flipping them back to their original state. 
All cross-terms in
\begin{equation} \begin{split} \label{Aeq:cross-terms}
    V^2 &
    = \sum\limits_{i,j=1}^k \sigma_{i} \sigma_{j} \otimes a_{i} a_{j} \\ &
    = \sum\limits_{i,j=1}^k \sigma_{i} \sigma_{j} \otimes X_{i} X_{(i+1)\text{ mod }k} X_{j} X_{(j+1)\text{ mod }k},
\end{split} \end{equation}
with $i\neq j$, do not contribute since some of the Pauli-$X$ operators on auxiliary qubits survive and leave the state outside of $ \mathcal{E}^{(0)}$. 
For visualization, we refer to Fig.~\ref{fig:gadget-analysis}(c).
Additionally, for $i = j$, and even on the target register, the terms that do contribute are necessarily squared operators and thus only identities:
\begin{equation}
    \sum\limits_{j=1}^k \sigma_{j}^2 \otimes \left(X_{j} X_{(j+1)\text{ mod }k} \right)^2 
    = \sum\limits_{j=1}^k \II=k\II.
\end{equation}
Altogether, we have that 
\begin{equation}
    \mathcal{A}^{(\leq 2)} = -\frac{\lambda^2 k}{2} P_0.
\end{equation}

\begin{figure*}[t]
    \centering
    \includegraphics[width=.99\linewidth]{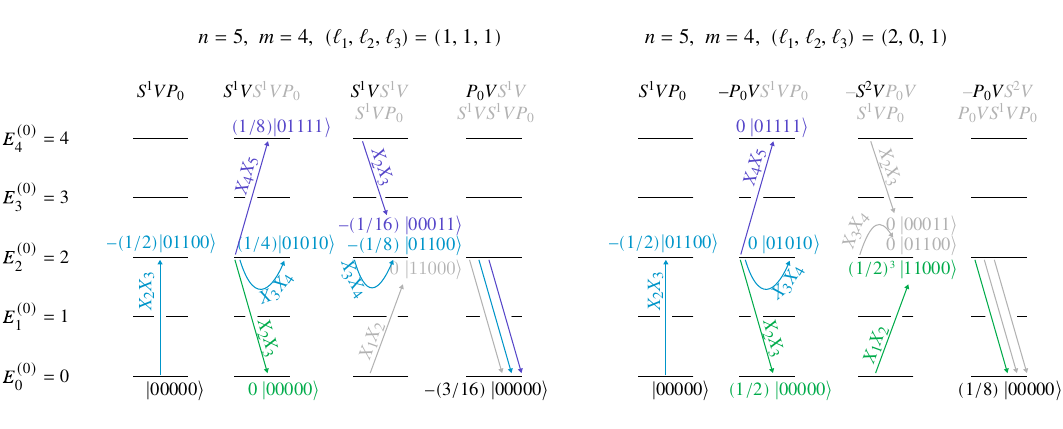}
    \caption[Computation of $\Xi$]{Illustration of the interplay of $V$ and $S^\ell$ in $\mathcal{A}$. Shown are
    some possible contributions to $\alpha_4$ in the case of seven qubits; see Eq.~\eqref{Aeq:A-lower-k}.
    Displayed are two choices of indices $\ell_i$ fulfilling~\eqref{Aeq:Bloch-A-indexes}
    and some terms from $V^4$ that starting from a state in $\mathcal{E}^{(0)}$ return to $\mathcal{E}^{(0)}$ for each choice.
    At each step of application of $S^{\ell_i}V$, the state is changed by $V$ by flipping two qubits, and $S^{\ell_i}$ weights the resulting state by a power of the inverse of the energy or $0$.}
    \label{fig:gadget-perturbations}
\end{figure*}

Similarly, for the next orders in the perturbative expansion, each pair of flipped auxiliary qubits has to be flipped back by the same operator in order to result in a non-zero contribution. 
For this to be true, $m$ is only allowed to take even values, and the perturbation is always applied as $P_0 V^{m} P_0 \propto P_0 $. 
That means that all contributions below order $k$ are proportional to $P_0$ and thus
\begin{equation} \label{Aeq:A-lower-k}
    \mathcal{A}^{(\leq k-1)} = \left( \sum\limits_{m=0}^{k-1} \alpha_m \lambda^m \right) P_0.
\end{equation}
Here, $\alpha_m$ is a coefficient depending on the number of combinations of $m$ excitations returning a state to $ \mathcal{E}^{(0)} $, and the corresponding energy penalties picked up along the way, which is independent of $\lambda$. 
To strengthen the intuition on the construction of $\alpha_m$ we refer to Fig.~\ref{fig:gadget-perturbations}, which displays a few examples of terms that do and do not contribute.

A different behavior can be first observed at $k^{\text{th}}$ order in perturbation theory.
There, by Eq.~\eqref{Aeq:aj-properties}, we can construct terms acting $k$ times on the auxiliary register as 
$ \prod_{j=1}^k a_j = \II^{\otimes k} $.
In other words, all qubits from the auxiliary register are flipped twice and thus returned to $ \mathcal{E}^{(0)} $, while each qubit in the target register is acted upon only once by the corresponding $\sigma_{j}$.
This results in all elements from $H^\text{target}$ being applied to the target register, yielding
\begin{equation}
    \mathcal{A}^{(k)}=-(-\lambda)^k \sum_{\boldsymbol{\ell}} \frac{1}{\xi_{\boldsymbol{\ell}}} P_0 \left( -(-1)^k \sigma_{1} \sigma_{2} \dots \sigma_{k} \otimes \II^{\otimes k}\right) P_0 = \lambda^k \sum_{\boldsymbol{\ell}} \frac{1}{\xi_{\boldsymbol{\ell}}} P_0 \left( H^\text{target} \otimes \II^{\otimes k} \right) P_0.
 \end{equation}
The sum over $\boldsymbol{\ell}$ is over all possible permutations in the application order of the $ \left( \sigma_{j},\ a_j \right)$ pairs, while $\xi_{\boldsymbol{\ell}}= \prod_{i=1}^k E_{\ell_i}$ is the factor originating from the corresponding energy penalties $E_{\ell_i} = \| H^{\text{aux}} a_{\ell_i} \ldots a_{\ell_0} |0\rangle^{\otimes k} \|_2$.
Defining 
\begin{equation}
{\Xi}^{-1} \coloneqq \sum_{\boldsymbol{\ell}} \frac{1}{\xi_{\boldsymbol{\ell}}},
\end{equation}
the full expansion of $ \mathcal{A} $ up to $k^{\text{th}}$ order in perturbation theory considering all possible combinations can then be written as
\begin{equation} \begin{split}
    \mathcal{A}^{(\leq k)} & 
    = f(\lambda) P_0 + \frac{\lambda^k}{\Xi} P_0 \left( H^\text{target} \otimes \II^{\otimes k} \right) P_0 \\ &
    = f(\lambda) P_0 + \frac{\lambda^k}{\Xi}  H^\text{target} \otimes (|0\rangle \! \<0|)^{\otimes k}.
\end{split} \end{equation}
Applying these results to Eq.~\eqref{Aeq:def-Heff} yields
\begin{equation} \label{eq:Jordan-Heff-gadget} 
\begin{split}
    H_\text{eff}( H^\text{gad}, 2^{n}) 
    &= \mathcal{U}f(\lambda) P_0 \mathcal{U}^{-1} 
    + \mathcal{U}\left[\frac{\lambda^k}{\Xi} H^\text{target} \otimes (|0\rangle \! \<0|)^{\otimes k} + \mathcal{O}(\lambda^{k+1}) \right] \mathcal{U}^{-1} \\ 
    &= f(\lambda) \Pi 
    + \frac{\lambda^k}{\Xi} H^\text{target} \otimes (|0\rangle \! \<0|)^{\otimes k} + \mathcal{O}(\lambda^{k+1}),
\end{split}
 \end{equation}
where $\mathcal{U} P_0 \mathcal{U}^{-1} = \Pi$ is the projector onto the support of $H_\text{eff}(H^{\text{gad}},2^n)$. 
Similarly to Ref.~\cite{jordanPerturbativeGadgetsArbitrary2008a}, we used the fact that the operators $\mathcal{U}$ on both sides leave the second term unaffected up to errors of order $\mathcal{O}(\lambda^{k+1})$.

\subsection{General case}
\label{A:general_case}
For the general case stated in Definition~\ref{def:hamiltonians}, the argument is similar, with the exception that one has to bear in mind that there are now $r$ auxiliary registers of $k$ qubits each.
There are then more cross-terms in the powers of $V$ and Eq.~\eqref{Aeq:cross-terms} becomes
\begin{equation}
    V^2 = \sum\limits_{s,p=1}^r \sum\limits_{i,j=1}^k \tilde{c}_{s,i} c_{p,j} \sigma_{s,i} \sigma_{p,j} \otimes X_{s,i} X_{s,i+1} X_{p,j} X_{p,j+1}.
\end{equation}
Similarly, only terms with $s = p$ and $i = j$ contribute. 
More contributions appear from all possible combinations of the different terms in the powers of $V$, but in the end, Eq.~\eqref{Aeq:A-lower-k} still holds: 
All non-zero contributions to $\mathcal{A}$ at orders lower than $k$ are proportional to $P_0$.

At $k^{\text{th}}$ order in perturbation theory, the non-trivial terms appear when acting $k$ times on a single auxiliary register, since  Eq.~\eqref{Aeq:aj-properties} only holds when all operators operate on the same register $s$.

On the target register, it produces a contribution of the form $\prod_{j=1}^k \tilde{c}_{s,j} \sigma_{s,1} \sigma_{s,2} \dots \sigma_{s,k} = -(-1)^k c_{s} h_s$.
Considering all possible cross terms of this kind emerging from $V^k$ leads to similar terms for each auxiliary register, yielding
\begin{equation} \begin{split}
\label{a-eq:oder-k}
    \mathcal{A}^{(\leq k)} & 
    = f(\lambda) P_0 + \sum_{s=1}^r \frac{\lambda^k}{\Xi} P_0 \left( c_s h_s \otimes \II^{\otimes k}_s \right) P_0 \\ &
    = f(\lambda) P_0 + \frac{\lambda^k}{\Xi} P_0 \left( \sum_{s=1}^r c_s h_s \otimes \II^{\otimes k}_s \right) P_0 \\ &
    = f(\lambda) P_0 + \frac{\lambda^k}{\Xi} P_0 \left( H^\text{target} \otimes \II^{\otimes rk} \right) P_0.
\end{split} \end{equation}
Applying these results to Eq.~\eqref{Aeq:def-Heff} results in
\begin{equation}
\begin{split}
    H_\text{eff}( H^\text{gad}, 2^{n})
    &=\mathcal{U}f(\lambda) P_0 \mathcal{U}^{-1}+  \mathcal{U}\left[\frac{\lambda^k}{\Xi} P_0 \left( H^\text{target} \otimes \II^{\otimes rk} \right) P_0 + \mathcal{O}(\lambda^{k+1}) \right] \mathcal{U}^{-1} \\
    &= f(\lambda) \Pi 
    + \frac{\lambda^k}{\Xi} H^\text{target} \otimes (|0\rangle \! \<0|)^{\otimes rk} + \mathcal{O}(\lambda^{k+1}),
\end{split}
\end{equation}
which is what we have been aiming for. 
Furthermore, by Eq.~\eqref{A-eq:convergence-condition}, we need to upper bound the perturbation strength $\lambda$ by $\gamma/(4\lVert V \rVert)$. For the presented gadget we have $\gamma=1$ and $\lVert V\rVert\leq \left(\sum_{s=1}^r|c_s|+r(k-1)\right)$ by the triangle inequality and the fact that the operator norm of Pauli operators is equal to one. Although not tight, we can thus upper bound $\lambda$ by
\begin{equation}
    \lambda \leq \frac{1}{4 \left(\sum_{s=1}^r|c_s|+r(k-1)\right)},
\end{equation}
concluding the proof of Theorem~\ref{theorem:main-result}.

\section{Proof of Corollary \ref{corollary:minimas}}
\label{A:proof-corr}

Corollary~\ref{corollary:minimas} results from a direct application of Eq.~\eqref{eq:Heff} for sufficiently small $\lambda$. 
Since $f(\lambda)\Pi$ is a constant shift on the whole subspace of interest, the eigenstates of $H_\text{eff} \left(H^{\text{gad}}, 2^{n} \right) $ and $H_\text{eff} \left(H^{\text{gad}}, 2^{n} \right) - f(\lambda)\Pi$ are identical, and all energy gaps preserved.
We can then rewrite Eq.~\eqref{eq:Heff} as
\begin{equation} \label{eq:Heff-shifted}
    \widetilde{H}_{\text{eff}}(H^{\text{gad}}, 2^{n}, f(\lambda)) = \frac{\lambda^k}{\Xi} H^\text{target} \otimes \left(|0 \rangle\!\< 0 |\right)^{\otimes rk} + \lambda^{k+1} O^\text{err} ,
\end{equation}
with $ \widetilde{H}_{\text{eff}} \left(H^{\text{gad}}, 2^{n}, f(\lambda) \right)\coloneqq H_\text{eff} \left(H^{\text{gad}}, 2^{n} \right) - f(\lambda)\Pi $ and  $ \| O^\text{err} \| \in \mathcal{O}(1)$.
Since the error shifts any eigenvalue by at most $\| \lambda^{k+1} O^\text{err} \|$, choosing $\lambda$ such that 
\begin{equation}
\| \lambda^{k+1} O^\text{err} \| \leq  {\lambda^k} ( E^\text{target}_1 - E^\text{target}_0 )/{\Xi},
\end{equation}
where $E_0^{\text{target}}$ and $E_1^{\text{target}}$ are the ground and first-excited energies, respectively, of $H^{\text{target}}$, ensures that the ground space of the right-hand side of Eq.~\eqref{eq:Heff-shifted} remains separated from the first excited subspace.
Considering that the spectrum of $H^\text{target}$ is independent of $\lambda$, we are always able to find a sufficiently small $\lambda$, whose upper bound is given by
\begin{equation}
    \lambda \leq \frac{E^\text{target}_1 - E^\text{target}_0 }{\Xi \|O^\text{err} \|},
\end{equation}
concluding the proof.

\section{Extensions of the locality gadget construction} 
\label{A:extensions}

\subsection{Extension to mixed Pauli weights}
\label{A:extension1}

The gadget presented in Definition~\ref{def:hamiltonians} seems to be restricted to the case in which all terms $h_s$ of the target Hamiltonian act non-trivially on the same number of qubits, namely, that they all act non-trivially on exactly $k$ qubits. 
Here, we want to argue that such a restriction, although useful to simplify the derivations and formulas, is not necessary.
We can also consider the case of a Hamiltonian $H^\text{target}$ whose terms $h_s$ are not all acting non-trivially on the same number of qubits, i.e., 
\begin{equation} \label{Aeq:def-Hcomp}
    H^\text{target} = \sum\limits_{s=1}^r c_s h_s \quad \text{with} \quad h_s = \bigotimes\limits_{j=1}^{k_s}\sigma_{s,j}
\end{equation}
where $k_s$ and $k_{s'}$ may be different. Let us, therefore, define $k = \max_s k_s$. This means that all terms $h_s$ act non-trivially on at most $k$ qubits.

First of all, one should note that no step of the derivation in Appendix~\ref{A:proof} actually relies on the fact that 
$\sigma_{s,j}$ is a linear combination of the Pauli operators $X, Y, Z$.
What has been leveraged, on the other hand, is that without loss of generality and to simplify the derivation, we can assume the property $\sigma_{s,j}^2 = \II$.
In other words, nothing prohibits us from extending this definition to include identity terms, such that
$\sigma_{s,j} \in \left\{ \hat{\boldsymbol{n}} \cdot \boldsymbol{\sigma}:\hat{\boldsymbol{n}} \in \mathds{R}^3,\, \hat{\boldsymbol{n}}\cdot\hat{\boldsymbol{n}}=1,\, \boldsymbol{\sigma}= (X, Y, Z) \right\} \bigcup \{ \II \}$. We can therefore redefine the gadget from Eq.~\eqref{eq:def-Hgad} such that $H^\text{aux}_s$ stays as in Definition~\ref{def:hamiltonians}, but the 
perturbation changes to 
\begin{equation}
    V_s = \sum\limits_{j=1}^{k_s} \tilde{c}_{s,j} \sigma_{s,j}^\text{target} \otimes X_{s,j}^{\text{aux}} \otimes X_{s,(j+1)\text{ mod }k}^{\text{aux}} 
    + \sum\limits_{j=k_s+1}^k X_{s,j}^{\text{aux}} \otimes X_{s,(j+1)\text{ mod }k}^{\text{aux}}.\end{equation}
With this change, even terms with $k_s < k$ will only contribute at $k^{\text{th}}$ order in perturbation theory, and we recover our main result even for Hamiltonians of the form presented in Eq.~\eqref{Aeq:def-Hcomp}.

\subsection{Extension to arbitrary target locality}
\label{A:extension2}

Having extended the use of our gadget to target Hamiltonians whose terms act non-trivially on different numbers of qubits, we can also extend it for arbitrary target locality. 
For now, let us assume again that all terms $h_s$ act non-trivially on exactly $k$ qubits.
Reducing the locality to a value lower than three is out of reach for our gadget construction. However, we can construct gadget Hamiltonians with localities between three and $k$ that lead to the same overall result as Theorem~\ref{theorem:main-result}. 
Doing so might seem to be a counter-productive goal at first, but such constructions will require fewer additional qubits and might thus be useful when it comes to practical implementations. 
After all, the three-body perturbative gadget introduced here requires $rk$ additional qubits.
We start by discussing the simpler case in which $k$ is divisible by $k'-2$, where $k'$ is the target locality, and then extend the findings to the general case.
 
If, for instance, one chooses the target locality to be four instead of three, Eq.~\eqref{eq:def-V_2} can be changed to
\begin{equation} 
    V_s = \sum\limits_{j=1}^{k/2} \tilde{c}_{s,j} \sigma_{s,2j-1}^\text{target} \otimes \sigma_{s,2j}^\text{target} \otimes X_{s,j}^{\text{aux}} \otimes X_{s,(j+1)\text{ mod }k/2}^{\text{aux}}.
\end{equation}
Each perturbation term now acts on two target qubits instead of one, halving the order in perturbation theory at which the target Hamiltonian is recovered and would consequently only require $r\frac{k}{2}$ auxiliary qubits to obtain a result equivalent to Eq.~\eqref{eq:Heff}.

In general, we can construct a gadget with an arbitrary target locality between three and $k$ by acting on more target qubits at once in each term of the perturbation.
If the original target Hamiltonian is $k$-body and the target locality of $H^\text{gad}$ is $k'$, the resulting gadget construction requires $\frac{k}{k'-2}$ auxiliary qubits for each term of the target Hamiltonian. 
Two of the $k'$ operators of the tensor product have to be the pair of Pauli-$X$ operators on the auxiliary register and the rest can be populated with the corresponding Pauli operators from $h_s$.

The statements above assume that $\frac{k}{k'-2}$ is equal to an integer; now let us look at the case where this is not the case.
Let us define $k^\top = \lceil\frac{k}{k'-2} \rceil$ and  $k^\perp = \lfloor\frac{k}{k'-2} \rfloor$.
Using the insights from Appendix~\ref{A:extension1}, we can lift the divisibility requirement and can construct a $k'$-body gadget Hamiltonian with the perturbation given by
\begin{equation}
    V_s = \sum\limits_{j=1}^{k^\perp} \tilde{c}_{s,j} \bigotimes_{\ell=1}^{k'-2} \sigma_{s,(j-1)(k'-2)+\ell}^\text{target} \otimes X_{s,j}^{\text{aux}} \otimes X_{s,(j+1)\text{ mod } k^\top}^{\text{aux}}
    + \tilde{c}_{s,k^\perp+1} \bigotimes_{\mathclap{m=(k^\perp-1)(k'-2)+1}}^{k} \sigma_{s,m}^\text{target} \otimes X_{s,k^\perp}^{\text{aux}} \otimes X_{s,(k^\perp+1)\text{ mod }k^\top}^{\text{aux}}.
\end{equation}
The last term may act on less than $k'-2$ target qubits but the paired $X$ operator ensures that the only contribution is the one \emph{acting} $k$ times on the target register with $k_s$ Pauli operators and $k-k_s$ identities.

When $k^\perp = {k}/({k'-2})$, the final term does not contribute and we recover the divisible case. Otherwise, the last term implements the remaining operators of the target term $h_s$.
For a target Hamiltonian whose terms act non-trivially on $k$ qubits 
simultaneously and a targeted locality of the gadget Hamiltonian of $k'$, the number of required auxiliary qubits is $k^\top r$.

\section[Barren plateaus]{Can perturbative gadgets help address the barren plateau problem?}
\label{A:barren-plateaus}

In this section, we provide an in-depth study of our results when applying perturbative gadgets in the context of gate-based quantum computing, in particular in the context of variational quantum algorithms.
One might ask if and why one would even need a perturbative gadget without constraints of adiabaticity. 
After all, digital, gate-based quantum computing is not restricted by existing couplings due to the design of more intricate gate decompositions and the freedom of constructing arbitrary gates from universal gate sets. 
Although finding practical uses in general gate-based computations is as of yet an open problem, one fruitful application could be in the context of \emph{variational quantum algorithms (VQAs)}~\cite{cerezoVariationalQuantumAlgorithms2021a,bhartiNoisyIntermediatescaleQuantum2022,mccleanTheoryVariationalHybrid2016a}, which are based upon measurements of a Hamiltonian expanded in the Pauli basis and for which it was shown that the Hamiltonian's locality, both in the sense of geometric locality and few-body terms, can play a role in their performance~\cite{cerezoCostFunctionDependent2021a,uvarovBarrenPlateausCost2021a}.

VQAs rely upon \emph{parametrized quantum circuits} (PQCs) to evaluate a parameter-dependent cost function related to the expectation value of a set of observables and are greatly discussed in the setting of \emph{noisy intermediate scale quantum (NISQ)} devices~\cite{preskillQuantumComputingNISQ2018a}.
The probably simplest setting is that of the \emph{variational quantum eigensolver} (VQE)~\cite{peruzzoVariationalEigenvalueSolver2014a}, which refers to the case where the cost function is equal to the expectation value of a Hamiltonian $H^\text{target}$, i.e., 
\begin{equation} \label{eq:cost-observable}
    C^\text{target}(\boldsymbol{\theta}) = \Tr\!\left[ H^\text{target} U(\boldsymbol{\theta}) \rho U(\boldsymbol{\theta})^{\dagger} \right],
\end{equation}
for some initial state $\rho$, e.g., one given by the state vector $\ketbra{00\dots 0}{00\dots 0}$, and quantum circuit ansatz $\boldsymbol{\theta}\mapsto U(\boldsymbol{\theta})$.
Solving the problem corresponds to estimating the ground state energy and associated ground state vector of $H^\text{target}$ by classically optimizing the parameters $\boldsymbol{\theta}$ and employing a quantum device to evaluate the aforementioned cost function.

A significant obstacle in the successful optimization is the presence of so-called \emph{barren plateaus}, which reflect the phenomenon that the variance of gradients of the cost function with respect to the variational parameters decays exponentially in the number of qubits of the PQC.
First discussed for the hardware-efficient ansatz with a polynomial number of layers in Ref.~\cite{mccleanBarrenPlateausQuantum2018a}, the emergence of barren plateaus has been observed in other settings as well \cite{wangNoiseInducedBarrenPlateaus2021,marreroEntanglementInducedBarren2021, holmesConnectingAnsatzExpressibility2022, martinBP22} and hinders the optimization procedure due to the therefore exponential number of required measurement samples. Scalable VQAs thus rely on practical mitigation strategies.

Existing approaches build upon improving initialization strategies \cite{grantInitializationStrategyAddressing2019}, using correlated gates or restricted parameter spaces~\cite{holmesConnectingAnsatzExpressibility2022,volkoffLargeGradientsCorrelation2021}, intermediate measurements~\cite{wiersemaMeasurementinducedEntanglementPhase2021}, the transferability of smooth solutions~\cite{meleAvoidingBarrenPlateaus2022}, layerwise learning~\cite{skolikLayerwiseLearningQuantum2021}, pre-optimization~\cite{kieferovaQuantumGenerativeTraining2021,dborinMatrixProductState2021}, different types of ansätze~\cite{liuMitigatingBarrenPlateaus2022}, classical shadows~\cite{sackAvoidingBarrenPlateaus2022a}, or alternative loss landscapes \cite{kianiLearningQuantumData2022}.

Furthermore, in Refs.~\cite{Khatri2019,laroseVQSD,prietoVQLS}, it has surprisingly been found that for large system sizes and circuit depths a \emph{local} cost function, i.e., a cost function defined by a few-body Hamiltonian, results in more effective optimization compared to a \emph{global} cost function, i.e., a cost function defined by a many-body Hamiltonian, and that global cost functions exhibit vanishing gradients for large system sizes already at constant circuit depths.
The concept of such \emph{cost-function-dependent} barren plateaus was formalized in Refs.~\cite{cerezoCostFunctionDependent2021a,uvarovBarrenPlateausCost2021a}, in which it has been further shown that local cost functions can still be optimized even for circuits of logarithmic depth.

Inspired by these findings, a natural question to ask is whether we can \emph{localize} any given cost function, and thus avoid cost-function-dependent barren plateaus in the corresponding optimization landscape. 
For the problems considered in Refs.~\cite{Khatri2019,laroseVQSD,prietoVQLS}, the local cost function corresponding to the problem could be defined essentially by inspection of the global cost function. 
However, this is not necessarily possible in general and motivates us to open the discussion on whether perturbative gadgets could help extend results on the non-existence of barren plateaus for local cost functions to non-local ones.

\subsection{A reduction from \texorpdfstring{$k$}{k}-body to three-body cost functions}

Applying the above-presented gadget to substitute cost functions of the form in \eqref{eq:cost-observable}, allows us to use Theorem~\ref{theorem:main-result} and Corollary~\ref{corollary:minimas} to conclude the following about VQAs:
Given a cost function as in Eq.~\eqref{eq:cost-observable}, we can replace the $k$-body Hamiltonian $H^{\text{target}}$ with the three-body gadget Hamiltonian $H^{\text{gad}}$ from Definition~\ref{def:hamiltonians}, such that finding the minimum of the gadget Hamiltonian will bring us close to the low-energy subspace of the target Hamiltonian, and thus the minimum of the original cost function. 

Furthermore, by \emph{localizing} the original cost function, we obtain cost functions with provably non-vanishing gradients as shown in Refs.~\cite{uvarovBarrenPlateausCost2021a,cerezoCostFunctionDependent2021a}. Specifically, they prove that the corresponding gradients decrease at most polynomially in the number of qubits for the same ansatz of a logarithmic depth.
Thus, via Ref.~\cite[Theorem~2]{uvarovBarrenPlateausCost2021a}, we obtain the following result:

\begin{corollary}[Bound to the variance of the gradient of the cost function] \label{corollary:uvarov}
Consider the local cost function $C^{\text{gad}}(\boldsymbol{\theta})=\Tr[H^{\text{gad}}U(\boldsymbol{\theta})\rho U(\boldsymbol{\theta})^{\dagger}]$ defined by the gadget Hamiltonian in Definition~\ref{def:hamiltonians}, where $U(\boldsymbol{\theta})$ is the unitary corresponding to an alternating layered ansatz with $L$ layers, with each block of the ansatz drawn independently from a local unitary 2-design. 
Then, the variance of the gradient of the cost function is bounded from below by
\begin{equation}
    \text{Var}\!\left[\frac{\partial C^\text{gad}(\boldsymbol{\theta})}{\partial \boldsymbol{\theta}_\nu} \right] \geq \Omega \left( \mathrm{poly} \left( 3^{-L} \right)\right),
\end{equation}
where $\nu$ denotes the subset of parameters the gradient is evaluated for. We are thus guaranteed non-vanishing gradients for logarithmic circuit depths, i.e., for $L=\mathcal{O}(\log n)$.
\end{corollary}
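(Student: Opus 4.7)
The plan is to verify the hypotheses of Theorem~2 of Ref.~\cite{uvarovBarrenPlateausCost2021a} for the observable $H^{\text{gad}}$ and then invoke that theorem essentially as a black box. That result establishes, for any observable expressible as a sum of operators each supported within a single block of an alternating layered ansatz, a lower bound of the form $\Omega(\mathrm{poly}(3^{-L}))$ on the variance of a gradient component of the associated cost function, provided each block is independently drawn from a local unitary 2-design.

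First I would observe that the 2-design condition is imposed in the hypotheses of the corollary, so the only substantive check is locality. From Definition~\ref{def:hamiltonians}, every summand of $H^{\text{gad}}$ is either a single-qubit projector $\ketbra{1}{1}^{\text{aux}}_{s,j}$ arising from $H^{\text{aux}}_s$, or a three-body Pauli string of the form $\sigma^{\text{target}}_{s,j}\otimes X^{\text{aux}}_{s,j}\otimes X^{\text{aux}}_{s,(j+1)\bmod k}$ arising from $V_s$. Since three is independent of $n$, $r$, and $k$, the observable $H^{\text{gad}}$ is local in precisely the sense required by the cited theorem, and the associated norms appearing in the prefactor depend only polynomially on $r$ and $k$.

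Second, I would fix a qubit layout for the alternating layered ansatz so that each three-body summand of $H^{\text{gad}}$ is contained inside a single block. This is always possible with block sizes of at least three, because the support of each $V_s$ consists of one target qubit together with two cyclically adjacent qubits from a single auxiliary register; placing each auxiliary register contiguously near its associated target qubit achieves the required fit. With this layout and the 2-design assumption in place, Theorem~2 of Ref.~\cite{uvarovBarrenPlateausCost2021a} applies directly to $C^{\text{gad}}$ and produces the claimed lower bound, whence $L=\mathcal{O}(\log n)$ suffices to guarantee inverse-polynomial gradient variance.

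I expect the main obstacle to be precisely this geometric matching between the gadget's interaction pattern and the ansatz's block decomposition: if the ansatz geometry is fixed in advance and cannot accommodate registers of width at least three, some $V_s$ terms may straddle block boundaries, and one would need a mild extension of the Uvarov--Biamonte analysis allowing interactions that cross one block boundary, which is known to preserve the polynomial scaling but requires bookkeeping of the induced prefactors. Beyond this, the argument is a straightforward reduction, since the gadget was engineered precisely to guarantee constant locality.
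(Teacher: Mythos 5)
Your proposal follows essentially the same route as the paper, which derives the corollary by direct appeal to Theorem~2 of Ref.~\cite{uvarovBarrenPlateausCost2021a} after noting that $H^{\text{gad}}$ is three-body by construction. Your additional attention to fitting each $V_s$ term inside a single block of the alternating layered ansatz is a reasonable elaboration of a point the paper treats only implicitly (and revisits informally in its discussion of qubit ordering), but it does not change the argument.
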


In Fig.~\ref{fig:variances}, we present a proof-of-principle numerical demonstration of non-vanishing gradients.
At this point, we would like to point out that simply having non-vanishing gradients does not imply successful optimization, i.e., trainability.
Although the converse is true, one should, in general, be careful when implying trainability from the existence of gradients. 
Take, for example, the Hamiltonian $H = H^\text{global} \otimes \II + \II^{\otimes (n-1)} \otimes Z$. The vanishing gradients of the first, global term would effectively result in only the second term contributing to the gradients.
That is, if $H^\text{global}$ is a prototypical global observable afflicted by barren plateaus, it will barely contribute to the magnitude of the gradients. 
Consequently, the circuit will effectively be trained on the other, local, term $\II^{\otimes (n-1)} \otimes Z$, which has measurable gradients. 
In the end, one will successfully implement the gradient descent algorithm to a minimum, but only of the local part, and not of the full Hamiltonian.
In other words, the total cost function has measurable gradients, but these non-vanishing gradients do not help with the actual optimization problem.

Our gadget has gradients that do not vanish exponentially and a ground state that is close to the target solution. Unfortunately, it is not enough to fully avoid the core issue behind barren plateaus: the exponential increase in the number of required measurements. 
For the extreme case of $k=n$, the magnitude of the contribution of the target Hamiltonian in the effective Hamiltonian, or in other words the degeneracy splitting mentioned in Fig.~\ref{fig:spectrum_embedding}, is suppressed exponentially by $\lambda^n$. For the results from perturbation theory to hold, $\lambda$ has to be chosen to be small (see Theorem~\ref{theorem:main-result}), further amplifying this problem.
As a result, when reaching the low-energy subspace of the gadget Hamiltonian and therefore entering the regime described by the effective Hamiltonian, one would in practice need an exponentially increasing number of measurements to resolve the nuances in the cost function and find the global minimum.
While we can guarantee a vanishing error in the limit of $\lambda \rightarrow 0$ and that all terms of the gadget Hamiltonian contribute to the gradient, the main contributions to the gradients would be essentially those corresponding to reaching the unperturbed ground space, i.e., some state vector $\ket{\psi}=\ket{\phi}\otimes\ket{0}^{\otimes rk}$.
Therefore, actually finding the ground state of the gadget Hamiltonian might be prohibitively hard. 
However, this hardness can be present even for cost functions without barren plateaus, as discussed in detail in Refs.~\cite{anschuetzBarrenPlateausQuantum2022,bittelTrainingVariationalQuantum2021}.
Moreover, for numerical proof-of-principle demonstrations, we find that leaving the strict regime of perturbation strengths corresponding to theoretical guarantees can help the optimization performance as shown in Appendix~\ref{A:numerics}. 

For large $\lambda$ we do not have such strong analytical guarantees but allow for larger contributions with respect to the perturbative terms. 
Our training simulations reflect these insights as we obtain different behaviors for different perturbation strengths.
These observations are aligned with previous works on perturbative gadgets that have also found it beneficial to increase the perturbation strength, including increasing it to values outside the regime of convergence of perturbation theory~\cite{bravyiQuantumSimulationManyBody2008,bauschGadgets2020}.
Practical implementations might thus benefit highly from treating the perturbation strength $\lambda$ as a hyperparameter during optimization.

It is worth noting that while using the proposed gadget Hamiltonian requires $rk$ additional qubits and comprises $2rk$ terms instead of $r$, there exists a simple, optimal measurement scheme for estimating its expectation value, relying on only four different measurement-basis settings, inspired by Refs.~\cite{elbenRandomizedMeasurementToolbox2022,huangPredictingManyProperties2020f}.
First, by measuring all auxiliary qubits in the Pauli-$Z$ basis, we are able to estimate $H^\textrm{aux}$.
Then, consecutively measuring all auxiliary qubits in the Pauli-$X$ basis and simultaneously all target qubits in one of the three Pauli bases, we are able to estimate all terms $V_s$.

\begin{figure}
    \centering
    \includegraphics{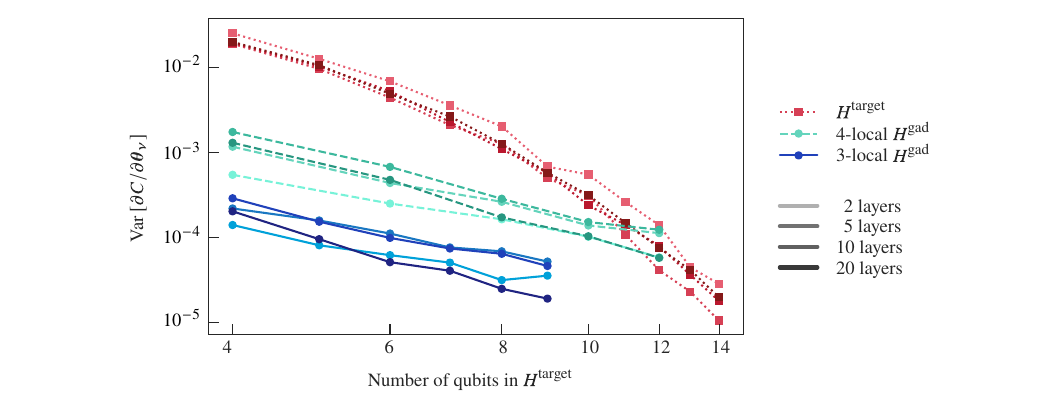}
    \caption{Demonstration of the theoretical guarantee from Corollary~\ref{corollary:uvarov}. We take as our global cost function the one in Eq.~\eqref{eq:cost-observable}, defined by the target Hamiltonian $H^\text{target} = \bigotimes_{i=1}^n Z_i$, and we use an alternating layered ansatz with different circuit depths. Shown is the variance of the gradient of this global cost function (dark pink), the local cost function based on our three-body gadget Hamiltonian from Definition~\ref{def:hamiltonians} (blue), and a similar local cost function defined by a four-body gadget Hamiltonian (turquoise) presented in Appendix~\ref{A:extensions}.}
    \label{fig:variances}
\end{figure}

\section{Inapplicability of the gadget by Jordan and Farhi for variational algorithms}\label{A:Farhi-inapplicability}

The perturbative gadget proposed by Jordan and Farhi~\cite{jordanPerturbativeGadgetsArbitrary2008a} relies strongly on the restriction of the allowed Hilbert space in which the state of the system can evolve. To be specific, in the language of Section~\ref{sec:recipe}, their gadget has a penalization Hamiltonian, as in Eq.~\eqref{eq-penalization_H}, of the form $H=\sum_{1\leq i_1<i_2\leq k} h_{i_1,i_2}$, with 
\begin{equation}\label{Aeq:JF-gadget}
    h_{i_1,i_2} = \frac{1}{2} (\II - Z_{i_1} \otimes Z_{i_2}),
\end{equation}
and the perturbation operator $A$ in Eq.~\eqref{eq-perturbation_A} is such that
\begin{equation}
    a_j  = X_j,\quad j\in\{1,2,\dotsc,k\}.
\end{equation}
However, since the ground space of their penalization Hamiltonian is two-dimensional and given by $\mathrm{span} \left\{ |0\rangle^{\otimes k},\ |1\rangle^{\otimes k}\right\}$, they have to enforce that the state of the system remains in the $+1$ eigenspace of the $A=X^{\otimes k}$ operator, thereby reducing the system to a one-dimensional ground state vector $| GS \rangle = \ket{GHZ} = ({1}/{\sqrt{2}}) \left( |0\rangle^{\otimes k} + |1\rangle^{\otimes k}\right)$. 
With this additional restriction, which can be fulfilled naturally in adiabatic quantum computing by initializing the auxiliary registers in GHZ states, it is easy to check that $X^{\otimes k}$ fulfills all properties stated in Eq.~\eqref{Aeq:conditions_A}. 
Unfortunately, this restriction is not possible outside of the adiabatic regime, and their gadget construction is consequently not applicable to our problem at hand.

However, let us now consider what happens when one cannot use initialization and adiabatic restriction to ensure a reduction of the reachable Hilbert space.
The unperturbed ground space of their penalization Hamiltonian $H$ in Eq.~\eqref{Aeq:JF-gadget} is 
\begin{equation} 
    \mathcal{E}^{(0)} = \operatorname{span} \left\{ |\varphi\rangle^\text{target} \bigotimes_{s=1}^r | \phi \rangle^\text{aux}_s : 
        |\varphi\rangle \in \mathcal{H}^\text{target},\,
        | \phi \rangle \in \operatorname{span} \left\{ |0\rangle^{\otimes k},\ |1\rangle^{\otimes k} \right\} \right\}
\end{equation}
and the corresponding projector is
\begin{equation}
    P_0 = \II \otimes \left( |0\rangle \! \<0|^{\otimes k} + |1\rangle \! \<1|^{\otimes k} \right)^{\otimes r}
    = \II \otimes \Pi_0^{\otimes r}.
\end{equation}
The fact that the dimension of $\mathcal{E}^{(0)}$ is not the same as the dimension of the target Hilbert space makes a considerable difference in Eq.~\eqref{a-eq:oder-k}, where the shifted contributions become
\begin{equation}
    \frac{\lambda^k}{\Xi} P_0 \left( \sum\limits_{s=1}^r c_s h_s \otimes X^{\otimes k}_s \right) P_0.
\end{equation}
Using the fact that 
\begin{equation}
    \Pi_0 = |GHZ_+ \rangle \! \< GHZ_+| + |GHZ_- \rangle \! \< GHZ_-|
\end{equation}
with
\begin{equation}
    |GHZ_{\pm} \rangle = \frac{1}{\sqrt{2}} \left( |0\rangle^{\otimes k} \pm |1\rangle^{\otimes k}\right),
\end{equation}
results in 
\begin{equation}
    \smash[b]{\frac{\lambda^k}{\Xi} \sum\limits_{s=1}^r} c_s h_s \otimes 
        \Pi_0^{\otimes s-1}
        \otimes (|GHZ_+ \rangle \! \< GHZ_+| - |GHZ_- \rangle \! \< GHZ_-|)
        \otimes \Pi_0^{\otimes r-s}.
\end{equation}
In other words, the effective Hamiltonian on the low-energy subspace of the gadget Hamiltonian is composed of a mixture of positive and negative contributions from each term as $\pm h_s$ is associated with different projectors on the different auxiliary registers. 
This prevents us from recovering the desired tensor-product structure between target and auxiliary registers with $H^\text{target}$ required to use this construction in practice, for our purposes; see also Ref.~\cite[Definition~11]{bauschGadgets2020}.

\section[Numerical simulations]{Numerical simulations of the application of the gadget for variational quantum algorithms}
\label{A:numerics}

To show how one can use the proposed gadget construction, we have studied its performance on the toy model $H^\text{target} =Z_1Z_2\dots Z_n$.
This specific example has already been used in the study of cost-function-dependent barren plateaus~\cite{holmesConnectingAnsatzExpressibility2022}. 
Furthermore, a single global Pauli string is the simplest example on which we can apply our gadget and that needs the least amount of resources. 
Indeed, we require $2n$ qubits to generate the corresponding cost function, while any additional global Hamiltonian term would imply $n$ additional qubits.
With this choice, having in mind the scaling of classical simulations, we can show the functioning of the gadget for the largest possible number of target qubits.
The purpose of our simulation is to demonstrate Theorem~\ref{theorem:main-result} and Corollary~\ref{corollary:minimas}; therefore, we performed exact, state-vector simulations, which are not subject to the considerations of exponential circuit evaluations presented previously.
We used a layered ansatz as done in Refs.~\cite{mccleanBarrenPlateausQuantum2018a, holmesConnectingAnsatzExpressibility2022}. While this ansatz does not fulfill the exact conditions of Corollary~\ref{corollary:uvarov}, it allows a comparison with previous works on the topic.

\subsection{Methods for gradient variance computations}
\label{A:variances-methods}

\begin{figure}
    \centering
    \includegraphics[width=.99\linewidth]{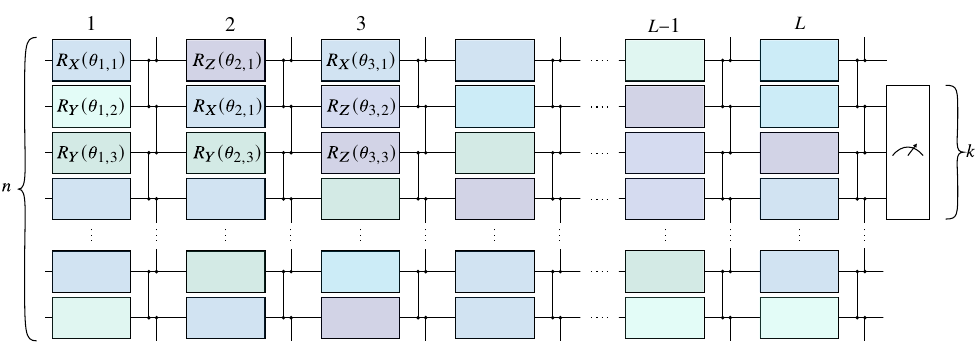}
    \caption{Circuit ansatz used for the PQC implemented in the simulations of the use of our perturbative gadget for VQE. Each layer is composed of randomly-chosen Pauli rotations on each qubit followed by controlled Pauli-$Z$ gates in a chain pattern.}
\end{figure}

As shown in Fig.~\ref{fig:variances}, the variances of the gradients of the energy for the target Hamiltonian decay exponentially, while those of the gadget Hamiltonian seem to decay subexponentially. 
The exponential scaling is more obvious when using linear-log axes, but the current log-log choice puts emphasis on the polynomial scaling of the curves for the gadget cost functions. Indeed, with this choice of axes, a polynomial scaling results in a straight line which seems to fit the data points well.
Although the presented regime does not show a clear advantage, the visible trend hints towards a crossover at larger qubit numbers that are out of reach for our classical simulator.

For these results, we have used the parameter on the $n^{\text{th}}$ qubit of the first layer when evaluating the variances. 
For the target Hamiltonian, this corresponds to the last qubit it acts upon, while it corresponds to the last qubit of the target register for the gadget Hamiltonian.
Based on the asymmetry of the gadget Hamiltonian with respect to the target and auxiliary registers, choosing the parameter related to the last qubit of the target register within the first layer guarantees a contribution of all Hamiltonian terms at any depth. 
However, we stress that the found trend is expected to be similar for all parameter choices, as indicated by Corollary~\ref{corollary:uvarov}.

\subsection{Training simulations}
\label{A:trainings}

\begin{figure*}
    \centering
    \includegraphics[width=\textwidth]{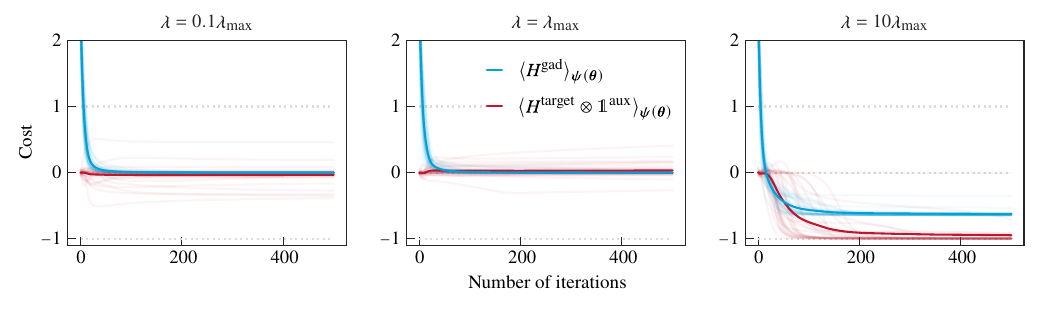}
    \caption{Training simulations using the gadget construction $H^\text{gad}$ introduced in Definition~\ref{def:hamiltonians} as cost to optimize the global Hamiltonian $H^\text{target} = Z^{\otimes 5}$. Each of the {30} runs is plotted in a light color while the more dark lines are the average over all runs.
    We employ an alternating layered ansatz on ten layers that is trained on $H^\text{gad}$ (blue).
    Furthermore, we evaluate the energy of $H^\text{target}\otimes\II$ (red) by measuring the expectation value of the target Hamiltonian for the output state of the circuit trained on the gadget Hamiltonian, hence tracing out the auxiliary part. While the theoretical guarantees of Theorem~\ref{theorem:main-result} only hold for $\lambda\leq\lambda_\text{max}$, this bound is not tight and we observe improved training for larger values. We thus propose treating $\lambda$ as an additional hyperparameter.}
    \label{fig:trainings}
\end{figure*}

Additionally to the gradient variance simulations, we have simulated gradient-based optimization of $H^\text{gad}$ while monitoring the expectation value of $H^\text{target}$ for different values of $\lambda$ as shown in Fig.~\ref{fig:trainings}.
From Theorem~\ref{theorem:main-result}, we know that when successfully reaching the global minimum of the expectation value of the gadget Hamiltonian, we can expect the output state to converge to a state close to the ground state of $H^\text{target}$. 
It has to be pointed out, however, that the subtleties of the low-energy part of the spectrum are exponentially suppressed in $\lambda$, which can result in slow optimizations or even exponential requirements in the presence of noise.
This can be seen in our experiments as the optimization stagnates for perturbation strengths within the range of validity of our theoretical results.
Nevertheless, we have successfully obtained minimizations of the expectation value of the target Hamiltonian for large values of $\lambda > \lambda_{\text{max}}$ within a reasonable number of iterations.
We thus propose using the interaction strength $\lambda$ as an additional hyperparameter and starting with larger values and reducing it later on for improved accuracy and faster training.
Furthermore, we note that the qubit ordering has an impact on the practical performance and that a different ordering has been employed for the variances and training plots. Further details are discussed in Appendix~\ref{A:reordering}.

Note that these training simulations are to be seen as proof-of-principle demonstrations, since, in the simulated regime, even the original, global Hamiltonian can be optimized.
Also, since the proposed gadget requires several times the number of qubits that the global Hamiltonian would, we quickly escape the realm of system sizes that can be efficiently simulated classically. 
Still, the scalings of our gadget construction remain 
compatible with the NISQ regime and could be a tool added to the arsenal of techniques for optimizing cost functions on near-term devices that are otherwise plagued by the barren plateau phenomenon.

\subsection{Performance improvements through qubit reordering}
\label{A:reordering}

\begin{figure}
    \centering
    \includegraphics[width=\textwidth]{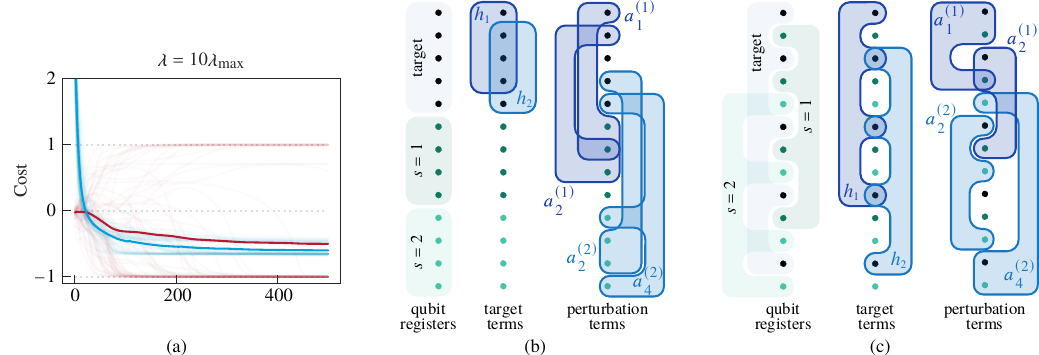}
    \caption{Two possible qubit orderings and the resulting simulations for the case of $n=5$, $k=4$, and $r=2$ (as in Fig.~\ref{fig:registers}). Panel (a) is the corresponding simulation to Fig.~\ref{fig:trainings} but using the basic ordering. Panels (b) and (c) show said ordering (used in Fig.~\ref{fig:variances}) and the improved one (used in Fig.~\ref{fig:trainings}) respectively. For each, first is the distribution of the 13 qubits into the three registers (left), then the terms from the target Hamiltonian and the qubits they act on (center), and finally a selection of the terms composing the gadget Hamiltonian (right). In contrast with Fig.~\ref{fig:registers}, we abandoned the circular layout and use a linear one to match the usual representation of the qubit register in a quantum computing circuit.}
    \label{fig:reordering}
\end{figure}

Until this point, we focused more on the conceptual message of using perturbative gadgets in VQAs to overcome cost-function-dependent barren plateaus and less on practical details relevant to their actual implementation. 
One of them is the qubit ordering in the joint register of the quantum computer or simulator. 
While our main theorems do not depend on the geometry of the quantum circuit, the couplings in the constructed gadget Hamiltonian do have a particular structure. 
Considering the state generated at the end of the PQC and the nearest-neighbor coupling of each layer of the used ansatz in combination with shallow circuit depths, it is reasonable to assume that the ordering of the qubits in a one-dimensional, joint register will have an effect.
Qubits coupled by the target Hamiltonian should probably be more entangled than those that are not coupled. 
We have thus explored different layouts of the target and auxiliary qubit registers with the goal of keeping coupled qubits closer.

First, let us discuss the straightforward qubit ordering used for the simulation results in Fig.~\ref{fig:variances}. As shown in Fig.~\ref{fig:reordering}(b), the qubits have been added to the total register as they appear in the gadget construction. That is, first the target register and then one auxiliary register after the other. 
The resulting coupling terms $\sigma_{s,j}^\text{target} \otimes X_{s,j}^\text{aux} \otimes X_{s,(j+1)\text{ mod }k}^\text{aux}$ are then divided with the target qubit being far from the two corresponding auxiliary qubits.
Training simulations in this setting (Fig.~\ref{fig:reordering} (a) ) were mostly successful, but in some instances ended up maximizing the target cost function instead of minimizing it.

Since the coupling graph resulting from $H^\text{gad}$ does not allow for a one-dimensional distribution such that only nearest-neighbor qubits are coupled, we settled for putting at least one of the relevant auxiliary qubits close to the target qubit.
To do so, we interleave the registers such that the first auxiliary qubit in $\sigma_{s,j}^\text{target} \otimes X_{s,j}^\text{aux} \otimes X_{s,(j+1)\text{ mod }k}^\text{aux}$ is added between the corresponding target qubit and the next target qubit. 
Due to the cyclic nature of the gadget construction, the second auxiliary qubit will be placed in the neighborhood of the following target qubit.
A simple example of this is presented in Fig.~\ref{fig:reordering} (c).
With this new ordering, we obtained the results shown in Fig.~\ref{fig:trainings}.
The comparison of the convergence properties of the training simulations in both settings provides numerical evidence of the role of qubit ordering when it comes to practical implementations of the introduced methods. 
First, the average convergence speed visibly improves when using the new ordering. 
Second, the better ordering does not exhibit any runs that maximize the target Hamiltonian as found for the first one, which could be attributed to the local minima of the gadget cost function.
Although this reordering does not pretend to be optimal but rather a first proposal, we could show for our example that considering the order of the qubits in combination with the structure of the minimized cost function can lead to clear improvements in the performance of VQAs in practical settings.

\section{A gadget for reducing measurement bases in the Hamiltonian readout problem}
\label{A:measurement_gadget}

The gadget discussed so far is a gadget tuned for maximally reducing the locality of the gadget Hamiltonian. 
There, we have also noted, that the energy of such a gadget construction can be estimated using only four measurement settings. 
This motivates the construction of a different gadget minimizing the resource overheads. 
Such a gadget is presented here.
Furthermore, since such a gadget can be expressed in four layers of mutually commuting terms, it could also be applicable to the fault-tolerant setting, where layers of mutually commuting $T$-gates can be executed in parallel by increasing the number of qubits~\cite{litinskiGameSurfaceCodes2019}.

\subsection{Gadget definition}
The gadget itself is constructed as follows: Let $H^{\text{target}}$ be an arbitrary Hamiltonian represented using $r$ differently weighted tensor products of Pauli operators, such that
\begin{equation}
    \label{eq:h_into_hxhyhz}
    H^{\text{target}}=\sum_{s=1}^rc_sH_s=\sum_{s=1}^rc_sH_s^XH_s^YH_s^Z,
\end{equation}
where each individual Pauli-word is decomposed into its Pauli-$X$, Pauli-$Y$, and Pauli-$Z$ part, e.g.,
\begin{align}
H_1 = & X_1 Y_2 X_3 Z_4,\\
H_1^X = & X_1 X_3,\\
H_1^Y = & Y_2 ,\\
H_1^Z = & Z_4.
\end{align}
Then, we choose our gadget Hamiltonian to be
\begin{equation}
    H^\text{gad}=H^\text{aux}+\lambda V
\end{equation}
with
\begin{equation}
    H^\text{aux}=-\sum_{i=1}^q Z_i
\end{equation}
acting as the unperturbed Hamiltonian on the $q$ auxiliary qubits and
\begin{equation}
    \label{eq:measurement_perturbation}
    \lambda V=\lambda\sum_{s=1}^r\sum_{A\in\{X,Y,Z\}}\tilde{c}_sH_s^A\otimes V_{s,A}^\text{aux}
\end{equation}
its perturbation. 
Since the goal is that products of $V_{s,A}^\text{aux}$ only become the identity if the corresponding product of $H_s^A$ is one of the target terms, we introduce
\begin{equation}
    \tau_i^X=X_{i-1}\qquad \tau_i^Y=X_{i} \qquad \tau_i^Z=X_{i-1}X_i
\end{equation}
for $X$ being the Pauli-$X$ operators acting on the auxiliary qubits.

The idea is to construct the terms $V_{s,A}^\text{aux}$ as products of some $\tau_i^A$ such that their product only results in the identity if we have $V_s^X V_s^Y V_s^Z$.
To do so, we define $M$ to be the set of all subsets of $\{1,2,\dots,q/2\}$ with an odd number of entries sorted with respect to their cardinality. There are
\begin{equation}
    \sum_{k \text{ odd}}^q \binom{q}{k} = 2^{q-1}
\end{equation}
many such sets. 
Then, constructing the perturbations $V_{s,A}^\text{aux}$ for the term $H_s^A$ using the set $M_s$, we define
\begin{equation}
\label{eq:nr_terms_qubits}
    V_{s,A}^\text{aux}=\prod_{i\in M_s}\tau_{2i}^A
\end{equation}
and
\begin{equation}
    \tilde{c}_s=\sqrt[3]{c_s|M_s|^2}.
\end{equation}
The ground state of $H^\text{aux}$ is given by $\ket{0}^{\otimes k}$.
Since the action of the perturbation is not immediately apparent, it is worthwhile to inspect it more closely. Imagine a system with $q=10$ auxiliary qubits $a_1,\dots,a_{q}$.
There, the individual $V_{i,A}^\text{aux}$ are chosen in the fashion of table \ref{tab:v_assignments}:
\begin{table}[ht]
    \centering
    \begin{tabular}{ r l || c c | c c | c c | c c | c c }
        &&$a_1$&$a_2$&$a_3$&$a_4$&$a_5$&$a_6$&$a_7$&$a_8$&$a_9$&$a_{10}$ \\ \hline
        $V_{1,X}^\text{aux}$ & $= \tau_{2}^X$ & $X$ & $\II$ &  &  &  &  &  &  &  &   \\
        $V_{1,Y}^\text{aux}$ & $= \tau_{2}^Y$ & $\II$ & $X$ &  &  &  &  &  &  &  &   \\
        $V_{1,Z}^\text{aux}$ & $= \tau_{2}^Z$ & $X$ & $X$ &  &  &  &  &  &  &  &   \\
        $V_{2,X}^\text{aux}$ & $= \tau_{4}^X$  &  &  & $X$ & $\II$  & &  &  &  &  &  \\
        $V_{3,X}^\text{aux}$ & $= \tau_{6}^X$  &  &  &  &  & $X$ & $\II$ &  &  &  &  \\
        $\vdots$ &  &  &  &  &  &  &  & $\ddots$ &  &  &  \\
        $V_{6,X}^\text{aux}$ & $= \tau_{2}^X \tau_{4}^X \tau_{6}^X$ & $X$ & $\II$ & $X$ & $\II$ & $X$ & $\II$ &  &  &  &   \\
        $V_{7,X}^\text{aux}$ & $= \tau_{2}^X \tau_{4}^X \tau_{8}^X$ & $X$& $\II$ & $X$ & $\II$ & & & $X$ & $\II$ & &    \\
        $\vdots$ &  &  &  &  &  &  &  &  &  & $\ddots$ &  \\
        $V_{16,X}^\text{aux}$ & $=\prod_{i=1}^5 \tau_{2i}^X $ & $X$ & $\II$ & $X$ & $\II$ & $X$ & $\II$ & $X$ & $\II$ & $X$ & $\II$
    \end{tabular}
    \caption{Exemplary assignments for the perturbation acting on ten auxiliary qubits for a computational Hamiltonian comprising 16 terms.}
    \label{tab:v_assignments}
\end{table}

In this fashion, we have that $V_s^X V_{s'}^Y V_{s''}^Z = \II$ only if $s = s' = s''$. 
In any other case, due to the perturbative expansion, the product does not contribute to the effective Hamiltonian.
Since there are two auxiliary qubits required as a unit, we obtain a qubit overhead of $q\geq2(\log_2(r)+1)$.
Here, we note that this perturbation structure is only one example and that more compact versions requiring fewer qubits might exist.

\subsection{Mathematical analysis}

The perturbation of $H^\text{aux}$ by $\lambda V$ results in a global energy shift and a splitting of the degeneracy of the ground space, mimicking $H^\text{target}$ and leading to the following result:

\begin{theorem}[Mathematical analysis]
\label{theo:measurement_gadget_effective}
    Given $H^\text{gad}$ acting on $n$ target and $q$ auxiliary qubits as defined above and $\lambda<1/2\norm{V}$ with the operator norm $\norm{...}$, $H_\text{eff}(H^\text{gad},d)$ will mimic $H^\text{target}$ at third order in perturbation theory with respect to $\lambda$, specifically
    \begin{equation}
        \label{eq:measurement_effective_gadget_hamiltonian}
        H_\text{eff}(H^\text{gad},2^n)=\Delta\Pi+\lambda^3H^\text{target}\otimes P_+ +\mathcal{O}(\lambda^4),
    \end{equation}
    where $\Delta$ is a constant energy shift and    $P_+=\proj{0}^{\otimes q}$ denotes the operator projecting all auxiliary qubits onto their ground state and $\Pi$ the projector on the support of $H_\text{eff}(H^\text{gad},2^k)$.
\end{theorem}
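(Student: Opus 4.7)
The plan is to apply the perturbative expansion of Appendix~\ref{A:Bloch-expansion} with $H^{(0)} \equiv H^\text{aux} = -\sum_{i=1}^q Z_i$ and perturbation $\lambda V$ as defined in Eq.~\eqref{eq:measurement_perturbation}. The unperturbed auxiliary ground state vector is the unique $\ket{0}^{\otimes q}$, with energy $-q$ and a spectral gap $\gamma = 2$. The unperturbed degenerate ground space on the full Hilbert space is $\mathcal{E}^{(0)} = \mathcal{H}^\text{target} \otimes \mathrm{span}\{\ket{0}^{\otimes q}\}$, with projector $P_0 = \II^{\otimes n} \otimes P_+$. The effective Hamiltonian is then obtained, up to $\mathcal{O}(\lambda^4)$, from $H_\text{eff}(H^\text{gad}, 2^n) = \mathcal{U}\mathcal{A}^{(\leq 3)}\mathcal{U}^{-1}$, and the bound $\lambda < 1/(2\|V\|)$ is precisely the convergence condition $\|\lambda V\| < \gamma/4$.

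Next I would verify the claim order by order. At first order, $P_0 V P_0 = 0$ because every $V_{s,A}^\text{aux}$ is a non-trivial product of Pauli-$X$ operators on the auxiliary register and therefore maps $\ket{0}^{\otimes q}$ out of its one-dimensional span. At second order, $P_0 V S^1 V P_0$ receives nonzero contributions only from pairs $((s,A),(s',A'))$ whose auxiliary-side product $V_{s,A}^\text{aux} V_{s',A'}^\text{aux}$ preserves $\ket{0}^{\otimes q}$. By the distinctness of the odd-cardinality subsets $M_s$ and the injective encoding of the Pauli label $A$ via $(\tau^X_{2i}, \tau^Y_{2i}, \tau^Z_{2i})$, this product is the identity on the auxiliary register if and only if $(s,A) = (s',A')$. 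Since $(H_s^A)^2 = \II$, every such diagonal contribution is a scalar multiple of $P_0$; combined with the constant $E^{(0)}_0 = -q$, they are absorbed into the term $\Delta\Pi$ after conjugation by $\mathcal{U}$.

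The crux lies at third order. The term $\lambda^3 P_0 V S^2 V S^0 V P_0$ from Eq.~\eqref{Aeq:expansion-A} vanishes via $S^0 = -P_0$ and $P_0 V P_0 = 0$, so $\mathcal{A}^{(3)} = \lambda^3 P_0 V S^1 V S^1 V P_0$. The main obstacle, as I anticipate it, is the combinatorial lemma that $V_{s_1,A_1}^\text{aux} V_{s_2,A_2}^\text{aux} V_{s_3,A_3}^\text{aux}\ket{0}^{\otimes q} = \ket{0}^{\otimes q}$ if and only if $s_1 = s_2 = s_3$ and $\{A_1, A_2, A_3\} = \{X, Y, Z\}$. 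I would argue this by identifying each $V_{s,A}^\text{aux}$ with a vector in $\mathbb{F}_2^q$ and analyzing the cancellation condition pair-wise across the $q/2$ qubit pairs, using the identity $v(X)+v(Y)+v(Z) = (0,0)$ in $\mathbb{F}_2^2$ (where $v(X)=(1,0)$, $v(Y)=(0,1)$, $v(Z)=(1,1)$) together with the odd cardinality of each $M_s$ to rule out spurious cancellations across distinct register indices $s$. Granted this lemma, the commutativity of $H_s^X$, $H_s^Y$, $H_s^Z$ on disjoint target qubits makes all six orderings produce the same target operator $H_s = H_s^X H_s^Y H_s^Z$.

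It then remains to check the coefficient. For each permutation of $(X,Y,Z)$, the two intermediate energy denominators depend only on the number of flipped auxiliary qubits after each step, namely $|M_s|$ after $V_{s,X}^\text{aux}$ or $V_{s,Y}^\text{aux}$ and $2|M_s|$ after $V_{s,Z}^\text{aux}$. A direct enumeration of the six orderings yields a total energy factor of $1/|M_s|^2$, with the overall sign made positive by the two $S^1$'s. Combined with $\tilde c_s^3 = c_s |M_s|^2$, this produces exactly $\lambda^3 c_s h_s \otimes P_+$ per target term, and summing over $s$ recovers $\lambda^3 H^\text{target} \otimes P_+$. Finally, conjugation by $\mathcal{U}$ leaves this leading contribution invariant up to $\mathcal{O}(\lambda^4)$, exactly as in the proof of Theorem~\ref{theorem:main-result}, completing the argument.
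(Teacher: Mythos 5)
Your plan follows the paper's proof essentially step for step: the Bloch expansion with $P_0=\II^{\otimes n}\otimes P_+$, the convergence condition $\|\lambda V\|<\gamma/4$ with $\gamma=2$, a vanishing first order, diagonal second-order terms absorbed into $\Delta\Pi$, and a third-order enumeration of the six orderings of $V_s^XV_s^YV_s^Z$ whose energy factor $1/|M_s|^2$ cancels against $\tilde{c}_s^3=c_s|M_s|^2$. The only point where you go beyond the paper is the explicit $\mathbb{F}_2$ argument for the combinatorial lemma (which the paper merely asserts), and the only caution is that the intermediate energy denominators are the penalties $2|M_s|$ and $4|M_s|$ — twice the flipped-qubit counts, since the gap of $-\sum_i Z_i$ is $2$ per flip — which is what actually yields the total factor $1/|M_s|^2$ you state.
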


\begin{proof}
    The fundamentals of the perturbative expansion are presented above in Appendix~\ref{A:Bloch-expansion}, identical to those also used in Appendix~\ref{A:proof}. Consequently, the expansion converges if
    \begin{equation}
        \label{eq:measurement_convergence_criterium}
        \norm{\lambda V}<\frac{\gamma}{4},
    \end{equation}
    with $\gamma$ being equal to the smallest energy gap of $H^\text{aux}$, i.e., $\gamma=2$ for $H^\text{gad}$ leading to the condition that
    \begin{equation}
        \lambda\leq\frac{1}{2\norm{V}}.
    \end{equation}

    Let us now start by identifying nonzero terms in every order of perturbation theory. 
    Since (see Eq.~\eqref{Aeq:expansion-A}) all terms are sandwiched by $P_0=\II\otimes P_+$ operators projecting the system onto the ground space $\mathcal{E}^{(0)}$ of $H^\text{aux}$, all terms vanish that do not take a state from $\mathcal{E}^{(0)}$ and return it to $\mathcal{E}^{(0)}$.
    For the perturbation introduced in Equation~\eqref{eq:measurement_perturbation}, this is equivalent to saying that no combination of terms of $V$ resulting in an auxiliary space unequal to $\ket{0}^{\otimes q}$ will survive.
    Consequently, there will be no surviving terms in the first order of perturbation theory.

    The terms in second order originate from $V^2$ and belong to one of the following two categories: either they vanish because they excite the auxiliary space in two different ways not returning the state to $\mathcal{E}^{(0)}$ or they are squares of individual terms that survive since Pauli operators square to the identity. The latter are responsible for the constant energy shift $\Delta$.

    In analogy, all third-order terms vanish except for those of the form $V_s^X V_s^Y V_s^Z$ resulting in a degeneracy splitting caused by the application of $H_s^X H_s^Y H_s^Z$ on the original system, because only a combination of $\tau_s^X$, $\tau_s^Y$, and $\tau_s^Z$ will yield $\ket{0}^{\otimes q}\in\mathcal{E}^{(0)}$ on the auxiliary register after the perturbation.
    There are six possible permutations of these three perturbations, and the perturbation corresponding to the Pauli-$Z$ string can happen first, second or last, resulting in different energy penalties.
    Applying $V_i^Z$ first or last will result in one intermediate state with a penalty of $4\abs{M_s}$ and one with a penalty of $2\abs{M_s}$, whereas applying the $V_i^Z$ perturbation in the middle results in both intermediate states having a penalty of $2\abs{M_s}$. 
    Furthermore, from the sum in Eq.~\eqref{Aeq:expansion-A}, only the case $(\ell_1, \ell_2) = (1, 1)$ contributes, and not $(2, 0)$.
    Including the corresponding combinatorial factors we find
    \begin{align}
        H_\text{eff}(H^\text{gad},2^n)^{(3)} & 
        =\sum_{s=1}^r \frac{1}{\abs{M_s}^2} \Bigg( 2 \underbrace{\frac{1}{(-2)(-2)}}_{V_i^Z \text{ second}} + 4 \underbrace{\frac{1}{(-4)(-2)}}_{V_i^Z \text{ first or last}} \Bigg) \tilde{c}_s^3\lambda^3H_s^XH_s^YH_s^Z\otimes P_+\nonumber\\
        &=\lambda^3\sum_{s=1}^rc_sH_s\otimes P_+=\lambda^3H^\text{target}\otimes P_+.
    \end{align}
\end{proof}

The striking feature of these types of gadgets is, that their Hamiltonian can be grouped into four sums of qubitwise commuting Pauli operators, which can therefore be estimated using only four measurement bases:
\begin{equation}
\label{eq:grouping_terms}
    H^\text{gad} = \underbrace{-\sum_{i=1}^q \II\otimes Z_i}_{\II_\text{target}\otimes Z_\text{aux}} + \underbrace{\lambda\sum_{s=1}^r\tilde{c}_sH_s^X\otimes V_{s,X}^\text{aux}}_{X_\text{target}\otimes X_\text{aux}} + \underbrace{\lambda\sum_{s=1}^r\tilde{c}_sH_s^Y\otimes V_{s,Y}^\text{aux}}_{Y_\text{target}\otimes X_\text{aux}} + \underbrace{\lambda\sum_{s=1}^r\tilde{c}_sH_s^Z\otimes V_{s,Z}^\text{aux}}_{Z_\text{target}\otimes X_\text{aux}}.
\end{equation}

However, as mentioned above, the scaling in $\lambda$ is unfavorable except for settings in which it is excessively expensive to change measurement settings but cheap to evaluate a circuit given a chosen setting.

\end{document}